\pgfplotsset{grid style={red}}
\title{The mergegram of a dendrogram and its stability}
\titlerunning{The mergegram of a dendrogram and its stability}
\author{Yury Elkin}{Materials Innovation Factory and Computer Science department, University of Liverpool, UK}{yura.elkin@gmail.com}{}{}
\author{Vitaliy Kurlin}{Materials Innovation Factory and Computer Science department, University of Liverpool, UK}{vitaliy.kurlin@gmail.com}{http://kurlin.org}{}
\authorrunning{Y.Elkin et. al.}
\keywords{clustering dendrogram, topological data analysis, persistence, stability}
\theoremstyle{definition}
\newtheorem{dfn}{Definition}[section]
\newtheorem{thm}[dfn]{Theorem}
\newtheorem{exa}[dfn]{Example}
\newtheorem{myclaim}[dfn]{Claim}
\newtheorem{alg}[dfn]{Algorithm}
\newtheorem{lem}[dfn]{Lemma}
\newcommand{\R}{\mathbb{R}}
\newcommand{\Z}{\mathbb{Z}}
\newcommand{\PS}{\mathbb{P}}
\newcommand{\U}{\mathbb{U}}
\newcommand{\V}{\mathbb{V}}
\newcommand{\W}{\mathbb{W}}
\newcommand{\I}{\mathbb{I}}
\newcommand{\id}{\mathrm{id}}
\newcommand{\life}{\mathrm{life}}
\newcommand{\MG}{\mathrm{MG}}
\newcommand{\PD}{\mathrm{PD}}
\newcommand{\HD}{\mathrm{HD}}
\newcommand{\BD}{\mathrm{BD}}
\newcommand{\ID}{\mathrm{ID}}
\newcommand{\MST}{\mathrm{MST}}
\newcommand{\de}{\delta}
\newcommand{\De}{\Delta}
\newcommand{\bs}{\hfill $\blacksquare$}
\begin{document}
\maketitle

\begin{abstract}
This paper extends the key concept of persistence within Topological Data Analysis (TDA) in a new direction.
TDA quantifies topological shapes hidden in unorganized data such as clouds of unordered points.
In the 0-dimensional case the distance-based persistence is determined by a single-linkage (SL) clustering of a finite set in a metric space.
Equivalently, the 0D persistence captures only edge-lengths of a Minimum Spanning Tree (MST).
Both SL dendrogram and MST are unstable under perturbations of points.
We define the new stable-under-noise mergegram, which outperforms
previous isometry invariants on a classification of point clouds by PersLay.
\end{abstract}

\section{Introduction: motivations and overview of the new results}
\label{sec:intro}

TDA is now expanding towards machine learning and statistics due to stability that was proved in a very general form by Chazal et al. \cite{chazal2016structure}.
The key idea of TDA is to view a given cloud of points across all scales $s$, e.g. by blurring given points to balls of a variable radius $s$.
The resulting evolution of topological shapes is summarized by a persistence diagram.
\medskip

\begin{exa}
\label{exa:5-point_line}
Fig.~\ref{fig:5-point_line} illustrates the key concepts (before formal definitions) for the point set $A = \{0,4,6,9,10\}$ in the real line $\R$.
Imagine that we gradually blur original data points by growing balls of the same radius $s$ around the given points.
The balls of the closest points $9,10$ start overlapping at the scale $s=0.5$ when these points merge into one cluster $\{9,10\}$.
This merger is shown by blue arcs joining at the node at $s=0.5$ in the single-linkage dendrogram, see the bottom left picture in Fig.~\ref{fig:5-point_line} and more details in Definition~\ref{dfn:sl_clustering}.

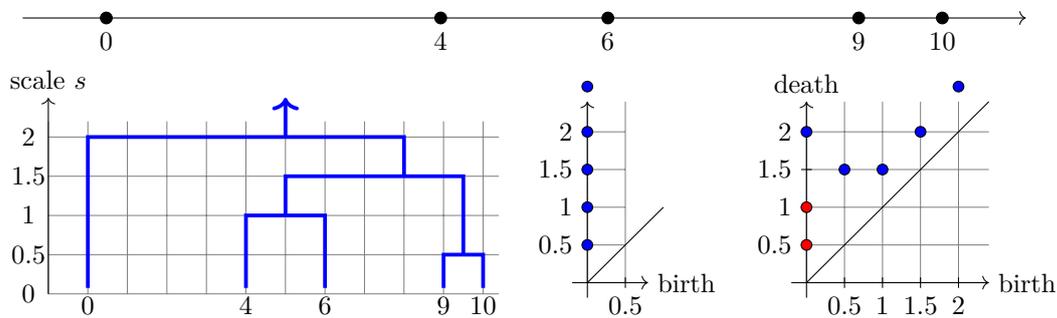
\begin{figure}[H]
\centering
\begin{tikzpicture}[scale = 1.1]
  \draw[->] (-1,0) -- (11,0) node[right]{} ;
   \foreach \x/\xtext in {0, 4, 6, 9, 10}
    \draw[shift={(\x,0)}] (0pt,2pt) -- (0pt,-2pt) node[below] {$\xtext$};   
   \filldraw (0,0) circle (2pt);
   \filldraw (4,0) circle (2pt);
   \filldraw (6,0) circle (2pt);
   \filldraw (9,0) circle (2pt);
   \filldraw (10,0) circle (2pt);
\end{tikzpicture}

\begin{tikzpicture}[scale = 0.52][sloped]
\draw[style=help lines,step = 1] (-1,0) grid (10.4,4.4);
\draw [->] (-1,0) -- (-1,5) node[above] {scale $s$};
\foreach \i in {0,0.5,...,2}{ \node at (-1.5,2*\i) {\i}; }
\node (a) at (0,-0.3) {0};
\node (b) at (4,-0.3) {4};
\node (c) at (6,-0.3) {6};
\node (d) at (9,-0.3) {9};
\node (e) at (10,-0.3) {10};
\node (x) at (5,5) {};
\node (de) at (9.5,1){};
\node (bc) at (5.0,2){};
\node (bcde) at (8.0,3){};
\node (all) at (5.0,4){};
\draw [line width=0.5mm, blue ] (a) |- (all.center);
\draw [line width=0.5mm, blue ] (b) |- (bc.center);
\draw [line width=0.5mm, blue ] (c) |- (bc.center);
\draw [line width=0.5mm, blue ] (d) |- (de.center);
\draw [line width=0.5mm, blue ] (e) |- (de.center);
\draw [line width=0.5mm, blue ] (de.center) |- (bcde.center);
\draw [line width=0.5mm, blue ] (bc.center) |- (bcde.center);
\draw [line width=0.5mm, blue ] (bcde.center) |- (all.center);
\draw [line width=0.5mm, blue ] [->] (all.center) -> (x.center);
\end{tikzpicture}
\hspace*{1mm}
\begin{tikzpicture}[scale = 1.0]
  \draw[style=help lines,step = 0.5] (0,0) grid (0.5,2.4);
  \draw[->] (-0.2,0) -- (0.8,0) node[right] {birth};
  \draw[->] (0,-0.2) -- (0,2.4) node[above] {};	
  \draw[-] (0,0) -- (1,1) node[right]{};
  \foreach \x/\xtext in {0.5/0.5}
    \draw[shift={(\x,0)}] (0pt,2pt) -- (0pt,-2pt) node[below] {$\xtext$};
  \foreach \y/\ytext in {0.5/0.5,  1/1, 1.5/1.5, 2.0/2}
    \draw[shift={(0,\y)}] (2pt,0pt) -- (-2pt,0pt) node[left] {$\ytext$};  
   \filldraw [fill=blue] (0,0.5) circle (2pt);
   \filldraw [fill=blue] (0.0,1) circle (2pt);
   \filldraw [fill=blue] (0,1.5) circle (2pt);
   \filldraw [fill=blue] (0,2) circle (2pt);
   \filldraw [fill=blue] (0,2.6) circle (2pt);
\end{tikzpicture}
\hspace*{1mm}
\begin{tikzpicture}[scale = 1.0]
  \draw[style=help lines,step = 0.5] (0,0) grid (2.4,2.4);
  \draw[->] (-0.2,0) -- (2.4,0) node[right] {birth};
  \draw[->] (0,-0.2) -- (0,2.4) node[above] {death};	
  \draw[-] (0,0) -- (2.4,2.4) node[right]{};
  \foreach \x/\xtext in {0.5/0.5, 1/1, 1.5/1.5, 2.0/2}
    \draw[shift={(\x,0)}] (0pt,2pt) -- (0pt,-2pt) node[below] {$\xtext$};
  \foreach \y/\ytext in {0.5/0.5,  1/1, 1.5/1.5, 2.0/2}
    \draw[shift={(0,\y)}] (2pt,0pt) -- (-2pt,0pt) node[left] {$\ytext$};  
   \filldraw [fill=red] (0,0.5) circle (2pt);
   \filldraw [fill = red] (0.0,1) circle (2pt);
   \filldraw [fill=blue] (0.0,2) circle (2pt);
   \filldraw [fill=blue] (0.5,1.5) circle (2pt);
   \filldraw [fill=blue] (1.0,1.5) circle (2pt);
   \filldraw [fill=blue] (1.5, 2.0) circle (2pt);
   \filldraw [fill=blue] (2, 2.6) circle (2pt);
\end{tikzpicture}
\caption{\textbf{Top}: the 5-point cloud $A = \{0,4,6,9,10\}\subset\R$.
\textbf{Bottom} from left to right: single-linkage dendrogram $\De_{SL}(A)$ from Definition~\ref{dfn:sl_clustering}, the 0D persistence diagram $\PD$ from Definition~\ref{dfn:persistence_diagram} and the new mergegram $\MG$ from Definition~\ref{dfn:mergegram}, where the red color shows dots of multiplicity 2.}
\label{fig:5-point_line}
\end{figure}

The persistence diagram $\PD$ in the bottom middle picture of Fig.~\ref{fig:5-point_line} represents this merger by the dot $(0,0.5)$ meaning that a singleton cluster of (say) point $9$ was born at the scale $s=0$ and then died later at $s=0.5$ (by merging into another cluster of point 10), see details in Definition~\ref{dfn:sl_clustering}.
When two clusters $\{4,6\}$ and $\{9,10\}$ merge at $s=1.5$, this event was previously encoded in the persistence diagram by the single dot $(0,1.5)$ meaning that one cluster inherited from (say) point 10 was born at $s=0$ and has died at $s=1.5$.
\medskip

For the same merger, the new mergegram in the bottom right picture of Fig.~\ref{fig:5-point_line} associates the following two dots.
The dot $(0.5,1.5)$ means that the cluster $\{9,10\}$ merged at the current scale $s=1.5$ was previously formed at the smaller scale $s=0.5$.
The dot $(1,1.5)$ means that another cluster $\{4,6\}$ merged at the current scale $s=1.5$ was formed at $s=1$. 
\medskip

Every arc in the single-linkage dendrogram between nodes at scales $b$ and $d$ contributes one dot $(b,d)$ to the mergegram, e.g. both singleton sets $\{9\}$, $\{10\}$ merging at $s=0.5$ contribute two dots $(0,0.5)$ or one dot of multiplicity 2 shown in red, see Fig.~\ref{fig:5-point_line}. 
\end{exa}

Example~\ref{exa:5-point_line} shows that the mergegram $\MG$ retains more geometric information of a set $A$ than the persistence diagram $\PD$.
It turns out that this new intermediate object (larger than $\PD$ and smaller than a full dendrogram) enjoys the stability of persistence, which makes $\MG$ useful for analysing noisy data in all cases when distance-based 0D persistence is used.
\medskip

Here is the summary of new contributions to Topological Data Analysis.
\smallskip

\noindent
$\bullet$
Definition~\ref{dfn:mergegram} introduces the concept of a mergegram for any dendrogram of clustering.
\smallskip

\noindent
$\bullet$
Theorem~\ref{thm:0D_persistence_mergegram} and Example~\ref{exa:mergegram_stronger} justify that the mergegram of a single-linkage dendrogram is strictly stronger than the 0D persistence of a distance-based filtration of sublevel sets.
\smallskip

\noindent
$\bullet$
Theorem~\ref{thm:stability_mergegram} proves that the mergegram of any single-linkage dendrogram is stable in the bottleneck distance under  perturbations of a finite set in the Hausdorff distance.
\smallskip

\noindent
$\bullet$
Theorem~\ref{thm:complexity} shows that the mergegram can be computed in a near linear time.

\section{Related work on hierarchical clustering and deep neural networks}
\label{sec:review}

The aim of clustering is to split a given set of points into clusters such that points within one cluster are more similar to each other than points from different clusters.
\medskip

A clustering problem can be made exact by specifying a distance between given points and restrictions on outputs, e.g. a number of clusters or a cost function to minimize.
\medskip

All hierarchical clustering algorithms can output a hierarchy of clusters or a dendrogram visualising mergers of clusters as explained later in Definition~\ref{dfn:dendrogram}.
Here we introduce only the simplest single-linkage clustering, which plays the central role in the paper.

\begin{dfn}[single-linkage clustering]
\label{dfn:sl_clustering}
Let $A$ be a finite set in a metric space $X$ with a distance $d:X\times X\to[0,+\infty)$.
Given a distance threshold, which will be called a scale $s$, any points $a,b\in A$ should belong to one \emph{SL cluster} if and only if there is a finite sequence $a=a_1,\dots,a_m=b\in A$ such that any two successive points have a distance at most $s$, i.e. $d(a_i,a_{i+1})\leq s$ for $i=1,\dots,m-1$.
Let $\De_{SL}(A;s)$ denote the collection of SL clusters at the scale $s$.
For $s=0$, any point $a\in A$ forms a singleton cluster $\{a\}$.
Representing each cluster from $\De_{SL}(A;s)$ over all $s\geq 0$ by one point, we get the \emph{single-linkage dendrogram} $\De_{SL}(A)$ visualizing how clusters merge, see the first bottom picture in Fig.~\ref{exa:5-point_line}.
\bs
\end{dfn}

Another way to visualize SL clusters is to build a Minimum Spanning Tree below.

\begin{dfn}[Minimum Spanning Tree $\MST(A)$]
\label{dfn:mst}
The \emph{Minimum Spanning Tree} $\MST(A)$ of a finite set $A$ in a metric space $X$ with a distance $d$ is a tree (a connected graph without cycles) that has the vertex set $A$ and the minimum total length of edges.
We assume that the length of any edge between vertices $a,b\in A$ is measured as $d(a,b)$.
\bs
\end{dfn}

A review of the relevant past work on persistence diagrams is postponed to section~\ref{sec:persistence_modules}, which introduces more auxiliary notions.
A persistence diagram consists of dots $(b,d)\in\R^2$ whose birth/death coordinates represent a life interval $[b,d)$ of a homology class, e.g. a connected component in a Vietoris-Rips filtration, see the bottom middle picture in Fig.~\ref{fig:5-point_line}.
\medskip

Persistence diagrams are isometry invariants that are stable under noise in the sense that a topological space and its noisy point sample have close persistence diagrams.
This stability under noise allows us to classify continuous shapes by using only their discrete samples.
\medskip

Imagine that several rigid shapes are sparsely represented by a few salient points, e.g. corners or local maxima of a distance function.
Translations and rotations of these point clouds do not change the underlying shapes.
Hence clouds should be classified modulo isometries that preserve distances between points.
The important problem is to recognize of a shape, e.g. within a given set of representatives, from its sparse point sample with noise.
This paper solves the problem by computing isometry invariants, namely the new mergegram, the 0D persistence and the pair-set of distances to two nearest neighbors for each point.
\medskip

Since all dots in a persistence diagram are unordered, our experimental  section~\ref{sec:experiments} uses a neural network whose output is invariant under permutations of input point by construction.
PersLay \cite{carriere2019perslay} is a collection of permutation invariant neural network layers i.e. functions on sets of points in $\mathbb{R}^n$ that give the same output regardless of the order they are inserted.  
\medskip

PersLay extends the neural network layers introduced in Deep Sets \cite{zaheer2017deep}. 
Perslay introduces new layers to specially handle persistence diagrams, as well as new form of representing such layers. 
Each layer is a combination of a coefficient layer $\omega(p):\mathbb{R}^n \rightarrow \mathbb{R}$, point transformation $\phi(p):\mathbb{R}^n \rightarrow \mathbb{R}^q$ and permutation invariant layer $\text{op}$ to retrieve the final output
$$\text{PersLay}(\text{diagram}) = \text{op}(\{\omega(p)\phi(p)\}), \text{ where } p \in \text{diagram (any set of points in }  \R^n).$$

\section{The merge module and mergegram of a dendrogram}
\label{sec:mergegram}

The section introduces a merge module (a family of vector spaces with consistent linear maps) and a mergegram (a diagram of points in $\R^2$ representing a merge module).

\begin{dfn}[partition set $\PS(A)$]
\label{dfn:partition}
For any set $A$, a \emph{partition} of $A$ is a finite collection of non-empty disjoint subsets $A_1,\dots,A_k\subset A$ whose union is $A$.
The \emph{single-block} partition of $A$ consists of the set $A$ itself.
The \emph{partition set} $\PS(A)$ consists of all partitions of $A$.
\bs
\end{dfn}

If $A=\{1,2,3\}$, then $(\{1,2\},\{3\})$ is a partition of $A$, but
$(\{1\},\{2\})$ and $(\{1,2\},\{1,3\})$ are not.
In this case the partition set $\PS(A)$ consists of 5 partitions
$$(\{1\},\{2\},\{3\}),\quad
(\{1,2\},\{3\}),\quad
(\{1,3\},\{2\}),\quad
(\{2,3\},\{1\}),\quad
(\{1,2,3\}).$$

Definition~\ref{dfn:dendrogram} below extends the concept of a dendrogram from \cite[section~3.1]{carlsson2010characterization} to arbitrary (possibly, infinite) sets $A$.
Since every partition of $A$ is finite by Definition~\ref{dfn:partition}, we don't need to add that an initial partition of $A$ is finite.
Non-singleton sets are now allowed.

\begin{dfn}[dendrogram of merge sets]
\label{dfn:dendrogram}
A \emph{dendrogram} over any set $A$ is a function $\Delta:[0,\infty)\to\PS(A)$ of a scale $s\geq 0$ satisfying the following conditions.
\smallskip

\noindent
(\ref{dfn:dendrogram}a)
There exists a scale $r\geq 0$ such that $\De(A;s)$ is the single block partition for all $s\geq r$. 
\smallskip

\noindent
(\ref{dfn:dendrogram}b)
If $s\leq t$, then $\De(A;s)$ \emph{refines} $\De(A;t)$, i.e. any set from $\De(t)$ is a subset of some set from $\De(A;t)$.
These inclusions of subsets of $X$ induce the natural map $\De_s^t:\De(s)\to\De(t)$.
\smallskip

\noindent
(\ref{dfn:dendrogram}c)
There are finitely many \emph{merge scales} $s_i$ such that $$s_0 = 0 \text{ and  } s_{i+1} = \text{sup}\{s \mid \text{ the map }  \De_s^t \text{ is identity for } s' \in [s_i,s)\}, i=0,\dots,m-1.$$

\noindent
Since $\De(A;s_{i})\to\De(A;s_{i+1})$ is not an identity map, there is a subset $B\in\De(s_{i+1})$ whose preimage consists of at least two subsets from $\De(s_{i})$.
This subset $B\subset X$ is called a \emph{merge} set and its \emph{birth} scale is $s_i$.
All sets of $\De(A;0)$ are merge sets at the birth scale 0.
The $\life(B)$ is the interval $[s_i,t)$ from its birth scale $s_i$ to its \emph{death} scale $t=\sup\{s \mid \De_{s_i}^s(B)=B\}$.
\bs
\end{dfn}

Dendrograms are usually represented as trees whose nodes correspond to all sets from the partitions $\De(A;s_i)$ at merge scales.
Edges of such a tree connect any set $B\in\De(A;s_{i})$ with its preimages under $\De(A;s_{i})\to\De(A;s_{i+1})$.
Fig.~\ref{fig:3-point_dendrogram} shows the dendrogram on $A=\{1,2,3\}$.
\medskip

\begin{figure}
\parbox{100mm}{
\begin{tabular}{lccccc}
partition $\De(A;2)$ at scale $s_2=2$ & & & $\{1,2,3\}$ & & \\
map $\De_1^2:\De(A;1)\to\De(A;2)$ & & & $\uparrow$ & $\nwarrow$ & \\
partition $\De(A;1)$ at scale $s_1=1$ & & & \{1, 2\} & & \{3\} \\
map $\De_0^1:\De(A;0)\to\De(A;1)$ & & $\nearrow$ & $\uparrow$ & & $\uparrow$  \\
partition $\De(A;0)$ at scale $s_0=0$ & $\{1\}$ & & $\{2\}$ & & \{3\} 
\end{tabular}}
\parbox{35mm}{
\begin{tikzpicture}[scale = 0.9]
  \draw[style=help lines,step = 1] (0,0) grid (2.4,2.4);
  \draw[->] (-0.2,0) -- (2.4,0) node[right] {birth};
  \draw[->] (0,-0.2) -- (0,2.4) node[above] {death};	
  \draw[-] (0,0) -- (2.4,2.4) node[right]{};
  \foreach \x/\xtext in {1/1, 2/2}
    \draw[shift={(\x,0)}] (0pt,2pt) -- (0pt,-2pt) node[below] {$\xtext$};
  \foreach \y/\ytext in {1/1, 2/2}
    \draw[shift={(0,\y)}] (2pt,0pt) -- (-2pt,0pt) node[left] {$\ytext$};
   \filldraw [fill=red] (0,1) circle (2pt);
   \filldraw [fill = red] (0,1) circle (2pt);
   \filldraw [fill = blue] (0,2) circle (2pt);
   \filldraw [fill = blue] (1,2) circle (2pt);
   \filldraw [fill = blue] (2,2.7) circle (2pt);
\end{tikzpicture}}
\caption{The dendrogram $\De$ on $A=\{1,2,3\}$ and its mergegram  $\MG(\De)$ from Definition~\ref{dfn:mergegram}.}
\label{fig:3-point_dendrogram}
\end{figure}
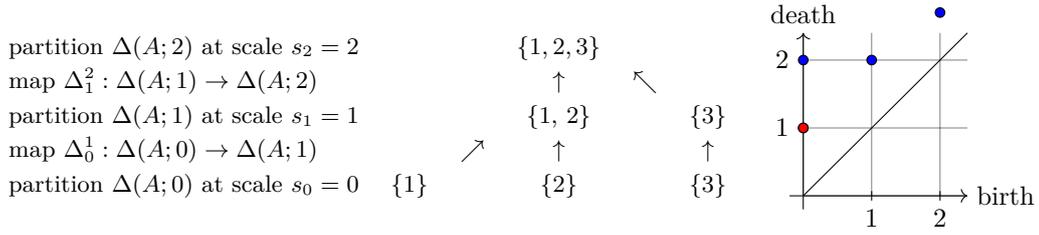

In the dendrogram above, the partition $\De(A;1)$ consists of $\{1,2\}$ and $\{3\}$.
The maps $\De_s^t$ induced by inclusions respect the compositions in the sense that $\De_s^t\circ\De_r^s=\De_r^t$ for any $r\leq s\leq t$, e.g. $\De_0^1(\{1\})=\{1,2\}=\De_0^1(\{2\})$ and $\De_0^1(\{3\})=\{3\}$, i.e. $\De_0^1$ is a well-defined map from the partition $\De(A;0)$ in 3 singleton sets to $\De(A;1)$, but isn't an identity.
\medskip

At the scale $s_0=0$ the merge sets $\{1\},\{2\}$ have $\life=[0,1)$, while the merge set $\{3\}$ has $\life=[0,2)$.
At the scale $s_1=1$ the only merge set $\{1,2\}$ has $\life=[1,2)$.
At the scale $s_2=2$ the only merge set $\{1,2,3\}$ has $\life=[2,+\infty)$.
The notation $\De$ is motivated as the first (Greek) letter in the word dendrogram and by a $\De$-shape of a typical tree above.
\medskip

Condition~(\ref{dfn:dendrogram}a) means that 
a partition of $X$ is trivial for all large scales $s$.
Condition~(\ref{dfn:dendrogram}b) says that when the scale $s$ in increasing sets from a partition $\De(s)$ can only merge with each other, but can not split. 
Condition~(\ref{dfn:dendrogram}c) implies that there are only finitely many mergers, when two or more subsets of $X$ merge into a larger merge set.
\medskip

\begin{lem}[single-linkage dendrogram]
\label{lem:sl_clustering}
Given a metric space $(X,d)$ and a finite set $A\subset X$, the single-linkage dendrogram $\De_{SL}(X)$ from Definition~\ref{dfn:sl_clustering} satisfies Definition~\ref{dfn:dendrogram}.
\end{lem}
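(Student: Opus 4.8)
The plan is to verify the three conditions (\ref{dfn:dendrogram}a)--(\ref{dfn:dendrogram}c) of Definition~\ref{dfn:dendrogram} directly, after first checking that $\De_{SL}(A;s)$ really is a partition of $A$ for every scale $s\ge 0$. For the latter, define a relation on $A$ by $a\sim_s b$ iff there is a finite chain $a=a_1,\dots,a_m=b$ in $A$ with $d(a_i,a_{i+1})\le s$ for all $i$. Reflexivity follows from the one-term chain $a=a_1$, symmetry from reversing a chain, and transitivity from concatenating two chains; hence $\sim_s$ is an equivalence relation and its classes are exactly the SL clusters, so they form a partition. For $s=0$ the only admissible chains are constant (since $d(a_i,a_{i+1})\le 0$ forces $a_i=a_{i+1}$ in a metric space), so every class is a singleton, as required. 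The empty or one-point set $A$ is a trivial special case.

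For condition (\ref{dfn:dendrogram}a), since $A$ is finite and nonempty we may set $r=\max_{a,b\in A}d(a,b)<\infty$. For any $s\ge r$ and any $a,b\in A$ the two-term chain $a,b$ already witnesses $a\sim_s b$, so $\De_{SL}(A;s)=\{A\}$ is the single-block partition for all $s\ge r$. For condition (\ref{dfn:dendrogram}b), if $s\le t$ then any chain witnessing $a\sim_s b$ also witnesses $a\sim_t b$, because $d(a_i,a_{i+1})\le s\le t$; hence each $\sim_s$-class is contained in a unique $\sim_t$-class, which is precisely the statement that $\De_{SL}(A;s)$ refines $\De_{SL}(A;t)$, and sending a class to the class containing it defines the natural map $\De_s^t$. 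Compatibility $\De_s^t\circ\De_r^s=\De_r^t$ for $r\le s\le t$ is immediate, since all these maps send a block to the unique block of the coarser partition that contains it.

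For condition (\ref{dfn:dendrogram}c), observe that the relation $\sim_s$ is completely determined by the edge set $E(s)=\{\{a,b\}\subset A:\ d(a,b)\le s\}$, and $E(s)$ changes only when $s$ crosses one of the finitely many values $d(a,b)$ with $a\ne b$. Let $0=d_0<d_1<\dots<d_N$ enumerate $\{0\}\cup\{d(a,b):a,b\in A,\ a\ne b\}$; because the defining inequality is $\le$, the edge set, and therefore $\De_{SL}(A;s)$, is constant on each half-open interval $[d_k,d_{k+1})$ (and on $[d_N,\infty)$). Consequently $\De_s^t$ is the identity whenever $d_k\le s\le t<d_{k+1}$, so the recursion in (\ref{dfn:dendrogram}c) produces merge scales $s_i$ forming a subset of the finite set $\{d_0,\dots,d_N\}$, and each $\life$ interval is the half-open interval cut out by two consecutive merge scales, ending with $[s_m,\infty)$. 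I expect the only mildly delicate point to be matching this piecewise-constant picture to the formal $\sup$-recursion defining the $s_i$: one must check that the recursion terminates exactly at the successive jump values of the step function $s\mapsto\De_{SL}(A;s)$. Everything else is routine bookkeeping.
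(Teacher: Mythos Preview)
Your proposal is correct and follows essentially the same approach as the paper---a direct verification of conditions (\ref{dfn:dendrogram}a)--(\ref{dfn:dendrogram}c) using the finiteness of the pairwise distances in $A$---though you supply considerably more detail. The only notable difference is that for (\ref{dfn:dendrogram}b) you argue directly via monotonicity of chains (a witness for $\sim_s$ is automatically a witness for $\sim_t$ when $s\le t$), whereas the paper invokes the inclusions $f^{-1}[0,s)\subseteq f^{-1}[0,t)$ for the distance-to-$A$ function $f$; your argument is arguably more self-contained, since it does not require relating single-linkage clusters on $A$ to sublevel sets in the ambient space $X$.
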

\begin{proof}
Since $A$ is finite, there are only finitely many inter-point distances within $A$, which implies condition (\ref{dfn:dendrogram}a,c).
Let $f(p):X\to\R$ be the distance from a point $p\in X$ to (the closest point of) $A$.
Condition (\ref{dfn:dendrogram}b) follows  the inclusions $f^{-1}[0,s) \subseteq f^{-1}[0,t) $ for $s\leq t$. 
\end{proof}

A \emph{mergegram} represents lives of merge sets by dots with two coordinates (birth,death).

\begin{dfn}[mergegram $\MG(\De)$]
\label{dfn:mergegram}
The \emph{mergegram} of a dendrogram $\De$ from Definition~\ref{dfn:dendrogram} has the dot (birth,death) in $\R^2$ for each merge set $A$ of $\De$ with $\life(A)$=[birth,death).
If any life interval appears $k$ times, the dot (birth,death) has the multiplicity $k$ in $\MG(\De)$.
\bs
\end{dfn}

For simplicity, this paper considers vector spaces with coefficients (of linear combinations of vectors) only in $\Z_2=\{0,1\}$, which can be replaced by any field.

\begin{dfn}[merge module $M(\De)$]
\label{dfn:merge_module}
For any dendrogam $\De$ on a set $X$ from Definition~\ref{dfn:dendrogram},
the \emph{merge module} $M(\De)$ consists of the vector spaces $M_s(\De)$, $s\in\R$, and linear maps $m_s^t:M_s(\De)\to M_t(\De)$, $s\leq t$.
For any $s\in\R$ and $A\in\De(s)$, the space $M_s(\De)$ has the generator or a basis vector $[A]\in M_s(\De)$.
For $s<t$ and any set $A\in\De(s)$, 
if the image of $A$ under $\De_s^t$ coincides with $A\subset X$, i.e. $\De_s^t(A)=A$, then $m_s^t([A])=[A]$, else $m_s^t([A])=0$. 
\bs
\end{dfn}

\begin{figure}[h]
\begin{tabular}{lccccccccc}
scale $s_3=+\infty$ & 0 & & & & & 0 \\
map $m_2^{+\infty}$ & $\uparrow$ & & & & & $\uparrow$\\
scale $s_2=2$ & $\Z_2$ & & & 0 & 0 & [\{1,2,3\}]\\
map $m_1^2$ & $\uparrow$ & & & $\uparrow$ & $\uparrow$\\
scale $s_1=1$ & $\Z_2\oplus\Z_2$ & 0 & 0 & [\{3\}] & [\{1,2\}] \\
map $m_0^1$ & $\uparrow$ & $\uparrow$ & $\uparrow$ & $\uparrow$ \\
scale $s_0=0$ & $\Z_2\oplus\Z_2\oplus\Z_2$ & [\{1\}] & [\{2\}] & [\{3\}] &
\end{tabular}
\caption{The merge module $M(\De)$ of the dendrogram $\De$ on the set $X=\{1,2,3\}$ in Fig.~\ref{fig:3-point_dendrogram}.}
\label{fig:3-point_module}
\end{figure}

\begin{exa}
\label{exa:5-point_set}
Fig.~\ref{fig:5-point_set} shows the metric space $X=\{a,b,c,d,e\}$ with distances defined by the shortest path metric induced by the specified edge-lengths, see the distance matrix. 

\begin{figure}[H]
\parbox{80mm}{
\begin{tikzpicture}[scale = 0.75][sloped]
\node (x) at (5,3) {x};
 \node (a) at (1,1) {a};
 \draw (a) -- node[above]{5} ++ (x);
 \node (b) at (3.5,4.0) {b};
 \draw (b) -- node[above]{1} ++ (x);
 \node (c) at (7,1) {c};
 \draw (c) -- node[below]{2} ++ (x);
 \node (y) at (8,3) {y};
 \draw (x) -- node[above]{2} ++ (y);
 \node (d) at (10,5){p};
 \node (e) at (10,1){q};
 \draw (y) -- node[below]{2} ++ (d);
 \draw (y) -- node[below]{2} ++ (e);
\end{tikzpicture}}
\parbox{40mm}{
\begin{tabular}{c|ccccc}
& a & b & c & p & q \\
\hline
a & 0 & 6 & 7 & 9 & 9 \\
b & 6 & 0 & 3 & 5 & 5 \\
c & 7 & 3 & 0 & 6 & 6 \\
p & 9 & 5 & 6 & 0 & 4 \\
q & 9 & 5 & 6 & 4 & 0 
\end{tabular}}
\caption{The set $X=\{a,b,c,d,e\}$ has the distance matrix defined by the shortest path metric.}
\label{fig:5-point_set}
\end{figure}
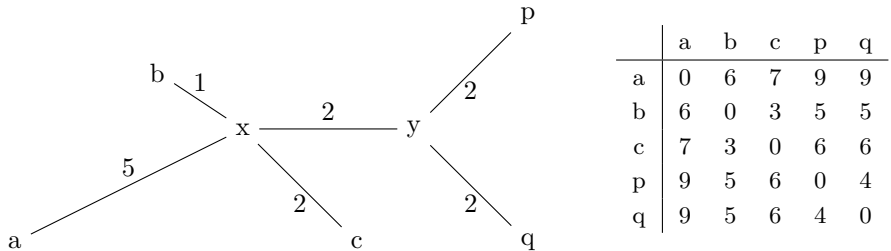

\begin{figure}[H]
\begin{tikzpicture}[scale = 0.6][sloped]
 \draw[style=help lines,step = 1] (-1,0) grid (10.4,6.3);

\foreach \i in {0,0.5,...,3.0} { \node at (-1.4,2*\i) {\i}; }
\node (a) at (0,-0.3) {a};
\node (b) at (4,-0.3) {b};
\node (c) at (6,-0.3) {c};
\node (d) at (8,-0.3) {p};
\node (e) at (10,-0.3) {q};
\node (x) at (5,6.75) {};

\node (de) at (9,4){};
\node (bc) at (5.0,3){};
\node (bcde) at (7.0,5){};
\node (all) at (5.0,6){};

\draw [line width=0.5mm, blue ] (a) |- (all.center);
\draw [line width=0.5mm, blue ] (b) |- (bc.center);
\draw [line width=0.5mm, blue ]  (c) |- (bc.center);
\draw [line width=0.5mm, blue ]  (d) |- (de.center);
\draw [line width=0.5mm, blue ]  (e) |- (de.center);
\draw [line width=0.5mm, blue ]  (de.center) |- (bcde.center);
\draw [line width=0.5mm, blue ]  (bc.center) |- (bcde.center);
\draw [line width=0.5mm, blue ]  (bcde.center) |- (all.center);
\draw [line width=0.5mm, blue] [->] (all.center) -> (x.center);
\end{tikzpicture}
\hspace*{1cm}
\begin{tikzpicture}[scale = 1.1]
 \draw[style=help lines,step = 0.5] (0,0) grid (3.4,3.4);
 \draw[->] (-0.2,0) -- (3.4,0) node[right] {birth};
  \draw[->] (0,-0.2) -- (0,3.4) node[above] {death};	
  \draw[-] (0,0) -- (3.4,3.4) node[right]{};

  \foreach \x/\xtext in {0.5/0.5, 1/1, 1.5/1.5, 2.0/2, 2.5/2.5, 3.0/3}
    \draw[shift={(\x,0)}] (0pt,2pt) -- (0pt,-2pt) node[below] {$\xtext$};

  \foreach \y/\ytext in {0.5/0.5,  1/1, 1.5/1.5, 2.0/2, 2.5/2.5, 3.0/3}
    \draw[shift={(0,\y)}] (2pt,0pt) -- (-2pt,0pt) node[left] {$\ytext$}; 
   \filldraw[fill=red] (0,1.5) circle (2pt);
   \filldraw [fill = red] (0.0,2.0) circle (2pt);
   \filldraw [fill = blue] (0.0,3) circle (2pt);
   \filldraw [fill = blue] (1.5,2.5) circle (2pt);
   \filldraw [fill = blue] (2.0,2.5) circle (2pt);
   \filldraw [fill = blue] (2.5, 3.0) circle (2pt);
   \filldraw [fill = blue] (3, 3.7) circle (2pt);
\end{tikzpicture}
\caption{\textbf{Left}: the dendrogram $\De$ for the single linkage clustering of the set 5-point set $X=\{a,b,c,d,e\}$ in Fig.~\ref{fig:5-point_set}.
\textbf{Right}: the mergegram $\MG(\De)$, red dots have multiplicity 2.}
\label{fig:5-point_set_mergegram}
\end{figure}
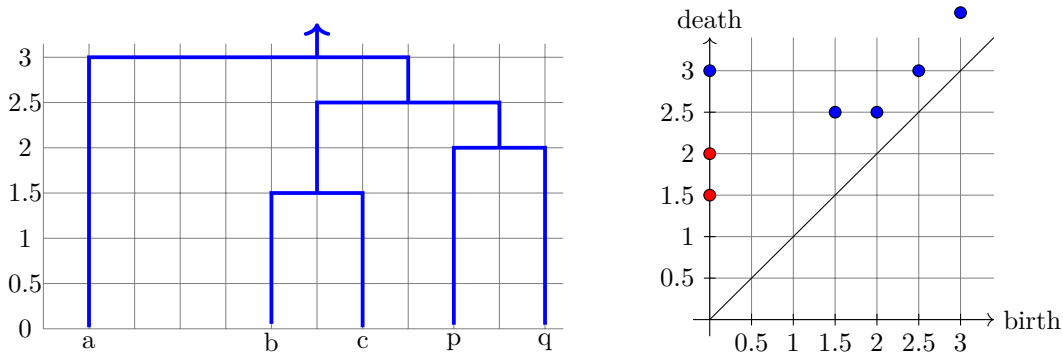

The dendrogram $\De$ in the first picture of Fig.~\ref{fig:5-point_set_mergegram} generates the mergegram as follows:
\begin{itemize}
\item 
each of the singleton sets $\{b\}$ and $\{c\}$ has the dot (0,1.5), so its multiplicity is 2; 
\item 
each of the singleton sets $\{p\}$ and $\{q\}$ has the dot (0,2), so its multiplicity is 2; 
\item 
the singleton set $\{a\}$ has the dot $(0,3)$;
the merge set $\{b,c\}$ has the dot (1.5,2.5);
\item
the merge set $\{p,q\}$ has the dot (2,2.5);
the merge set $\{b,c,p,q\}$ has the dot (2.5,3);
\item
the merge set $\{a,b,c,p,q\}$ has the dot $(3,+\infty)$.
\end{itemize}
\end{exa}

\section{Background on persistence modules and diagrams}
\label{sec:persistence_modules}

This section introduces the key concepts from the thorough review by Chazal et al. \cite{chazal2016structure}.
As will become clear soon, the merge module of any dendrogram belongs to a wider class below.

\begin{dfn}[persistence module $\V$]
\label{dfn:persistence_module}
A \emph{persistence module} $\mathbb{V}$ over the real numbers $\mathbb{R}$ is a family of vector spaces $V_t$, $t\in \mathbb{R}$ with linear maps $v^t_s:V_s \rightarrow V_t$, $s\leq t$ such that $v^t_t$ is the identity map on $V_t$ and the composition is respected: $v^t_s \circ v^s_r = v^t_r$ for any $r \leq s \leq t$.
\bs
\end{dfn}

The set of real numbers can be considered as a category  $\mathbb{R}$ in the following sense.
The objects of $\R$ are all real numbers. 
Any two real numbers such that $a\leq b$ define a single morphism $a\to b$.
The composition of morphisms $a\to b$ and $b \to c$ is the morphism $a \leq c$. 
In this language, a persistence module is a functor from $\R$ to the category of vector spaces.
\medskip

A basic example of $\V$ is an interval module.
An interval $J$ between points $p<q$ in the line $\R$ can be one of the following types: closed $[p,q]$, open $(p,q)$ and half-open or half-closed $[p,q)$ and $(p,q]$.
It is convenient to encode types of endpoints by $\pm$ superscripts as follows:
$$[p^-,q^+]:=[p,q],\quad
[p^+,q^-]:=(p,q),\quad
[p^+,q^+]:=(p,q],\quad
[p^-,q^-]:=[p,q).$$

The endpoints $p,q$ can also take the infinite values $\pm\infty$, but without superscripts.

\begin{exa}[interval module $\I(J)$]
\label{exa:interval_module}
For any interval $J\subset\R$, the \emph{interval module} $\I(J)$ is the persistence module defined by the following vector spaces $I_s$ and linear maps $i_s^t:I_s\to I_t$
$$I_s=\left\{ \begin{array}{ll} 
\Z_2, & \mbox{ for } s\in J, \\
0, & \mbox{ otherwise }; 
\end{array} \right.\qquad
i_s^t=\left\{ \begin{array}{ll} 
\id, & \mbox{ for } s,t\in J, \\
0, & \mbox{ otherwise }
\end{array} \right.\mbox{ for any }s\leq t.$$
\end{exa}
\medskip

The direct sum $\W=\U\oplus\V$ of persistence modules $\U,\V$ is defined  as the persistence module with the vector spaces $W_s=U_s\oplus V_s$ and linear maps $w_s^t=u_s^t\oplus v_s^t$.
\medskip

We illustrate the abstract concepts above using geometric constructions of Topological Data Analysis.
Let $f:X\to\R$ be a continuous function on a topological space.
Its \emph{sublevel} sets $X_s^f=f^{-1}((-\infty,s])$ form nested subspaces $X_s^f\subset X_t^f$ for any $s\leq t$.
The inclusions of the sublevel sets respect compositions similarly to a dendrogram $\De$ in Definition~\ref{dfn:dendrogram}.
\medskip

On a metric space $X$ with with a distance function $d:X\times X\to[0,+\infty)$, a typical example of a function $f:X\to\R$ is the distance to a finite set of points $A\subset X$. 
More specifically, for any point $p\in X$, let $f(p)$ be the distance from $p$ to (a closest point of) $A$.
For any $r\geq 0$, the preimage $X_r^f=f^{-1}((-\infty,r])=\{q\in X \mid d(q,A)\leq r\}$ is the union of closed balls that have the radius $r$ and centers at all points $p\in A$.
For example, $X_0^f=f^{-1}((-\infty,0])=A$ and $X_{+\infty}^f=f^{-1}(\R)=X$.
\medskip

If we consider any continuous function $f:X\to\R$, we have the inclusion $X_s^f\subset X_r^f$ for any $s\leq r$.
Hence all sublevel sets $X_s^f$ form a nested sequence of subspaces within $X$.
The above construction of a \emph{filtration} $\{X_s^f\}$ can be considered as a functor from $\R$ to the category of topological spaces.  
Below we discuss the most practically used case of dimension 0.

\begin{exa}[persistent homology]
\label{exa:persistent_homology}
For any topological space $X$,  the 0-dimensional \emph{homology} $H_0(X)$ is the vector space (with coefficients $\Z_2$) generated by all connected components of $X$.
Let $\{X_s\}$ be any \emph{filtration} of nested spaces, e.g. sublevel sets $X_s^f$ based on a continuous function $f:X\to\R$.
The inclusions $X_s\subset X_r$ for $s\leq r$ induce the linear maps between homology groups $H_0(X_s)\to H_0(X_r)$ and define the \emph{persistent homology} $\{H_0(X_s)\}$, which satisfies the conditions of a persistence module from Definition~\ref{dfn:persistence_module}.
\bs
\end{exa}
\medskip

If $X$ is a finite set of $m$ points, then $H_0(X)$ is the direct sum $\Z_2^m$ of $m$ copies of $\Z_2$.  
\medskip

The persistence modules that can be decomposed as direct sums of interval modules can be described in a very simple combinatorial way by persistence diagrams of dots in $\R^2$.

\begin{dfn}[persistence diagram $\PD(\V)$]
\label{dfn:persistence_diagram}
Let a persistence module $\V$ be decomposed as a direct sum of interval modules from Example~\ref{exa:interval_module} : $\V\cong\bigoplus\limits_{l \in L}\I(p^{*}_l,q^{*}_l)$, where $*$ is $+$ or $-$.
The \emph{persistence diagram} $\PD(\V)$ is the multiset 
$\PD(\mathbb{V}) = \{(p_l,q_l) \mid l \in L \} \setminus \{p=q\}\subset\R^2$.
\bs
\end{dfn}
\medskip

The 0-dimensional persistent homology of a space $X$ with a continuous function $f:X\to\R$ will be denoted by $\PD\{H_0(X_s^f)\}$.
Lemma~\ref{lem:merge_module_decomposition} will prove that the merge module $M(\De)$ of any dendrogram $\De$ is also decomposable into interval modules.
Hence the mergegram $\MG(\De)$ from Definition~\ref{dfn:mergegram} can be interpreted as the persistence diagram of the merge module $M(\De)$.   

\section{The mergegram is stronger than the 0-dimensional persistence}
\label{sec:mergegram_stronger}

Let $f:X\to\R$ be the distance function to a finite subset $A$ of a metric space $(X,d)$.
The persistent homology $\{H_k(X_s^f)\}$ in any dimension $k$ is invariant under isometries of $X$.
\medskip

Moreover, the persistence diagrams of very different shapes, e.g. topological spaces and their discrete samples, can be easily compared by the bottleneck distance in Definition~\ref{dfn:bottleneck_distance}.
\medskip

Practical applications of persistence are justified by Stability Theorem~\ref{thm:stability_persistence} saying that the persistence diagram continuously changes under perturbations of a given filtration or an initial point set.
A similar stability of mergegrams will be proved in Theorem~\ref{thm:stability_mergegram}.
\medskip

This section shows that the mergegram $\MG(\De_{SL}(A))$ has more isometry information about the subset $A\subset X$ than the 0-dimensional persistent homology $\{H_0(X_s^f)\}$.
\medskip

Theorem~\ref{thm:0D_persistence_mergegram} shows how to obtain the 0D persistence $\PD\{H_0(X_s^f)\}$ from $\MG(\De_{SL}(A))$, where $f:X\to\R$ is the distance to a finite subset $A\subset X$.
Example~\ref{exa:mergegram_stronger} builds two 4-point sets in $\R$ whose persistence diagrams are identical, but their mergegrams are different.
\medskip

We start from folklore Claims~\ref{claim:0D_persistence_SL}-\ref{claim:0D_persistence_MST}, which interpret the 0D persistence $\PD\{H_0(X_s^f)\}$ using the classical concepts of the single-linkage dendrogram and Minimum Spanning Tree.

\begin{myclaim}[0D persistence from $\De_{SL}$]
\label{claim:0D_persistence_SL}
For a finite set $A$ in a metric space $(X,d)$, let $f:X\to\R$ be the distance to $A$.
In the single-linkage dendrogram $\De_{SL}(A)$, let $0<s_1<\dots<s_m<s_{m+1}=+\infty$ be all distinct merge scales.
If $k\geq 2$ subsets of $A$ merge into a larger subset of $A$ at a scale $s_i$, the multiplicity of $s_i$ is $\mu_i=k-1$.
Then the persistence diagram $\PD\{H_0(X_s^f)\}$ consists of the dots $(0,s_i)$ with multiplicities $\mu_i$, $i=1,\dots,m+1$.
\bs
\end{myclaim}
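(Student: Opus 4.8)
The plan is to connect the single-linkage dendrogram $\De_{SL}(A)$ directly to the sublevel-set filtration $\{X_s^f\}$ of the distance function $f$, and then track connected components. The key observation is that for a finite set $A$ and $f = d(\cdot, A)$, the sublevel set $X_s^f$ is the union of closed balls of radius $s$ around points of $A$, and two points $a, b \in A$ lie in the same connected component of $X_s^f$ if and only if they lie in the same SL cluster at scale $s$ (for $s$ below any merge scale this is a $2s$-chain condition that matches the $\le s$-chain in Definition~\ref{dfn:sl_clustering} up to a harmless reparametrization — here I would be careful about the exact scaling convention the paper uses, since Example~\ref{exa:5-point_line} suggests balls of radius $s$ meet when $d/2 \le s$, but Definition~\ref{dfn:sl_clustering} uses $d(a_i,a_{i+1}) \le s$; I would state the claim so that merge scales of $\De_{SL}(A)$ are exactly the scales at which $H_0$ drops, matching the convention already fixed by the figures). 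Consequently $H_0(X_s^f)$ has rank equal to the number of SL clusters at scale $s$, i.e. the number of blocks in $\De_{SL}(A;s)$.

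First I would record that $H_0(X_0^f) = H_0(A) \cong \Z_2^n$ where $n = |A|$, since $X_0^f = A$ is $n$ discrete points, so every generator is born at $s=0$; this forces every dot of $\PD\{H_0(X_s^f)\}$ to have birth coordinate $0$, and there are $n$ of them before removing diagonal dots (there are none on the diagonal since the first merge happens at some $s_1 > 0$). Then I would analyze what happens at each merge scale $s_i$: by the structure of $\De_{SL}(A)$, at $s_i$ some collection of clusters, say $k = k_i$ of them, merges into a single cluster. Passing from $s_i - \epsilon$ to $s_i$ (or, more precisely, across the scale $s_i$ in the filtration), the rank of $H_0$ drops by exactly $k_i - 1$, because $k_i$ components become one. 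By the standard theory of persistence of $\{H_0(X_s^f)\}$ (it is a decomposable persistence module, as in Example~\ref{exa:persistent_homology} and Definition~\ref{dfn:persistence_diagram}), this drop is realized by exactly $k_i - 1$ interval modules dying at $s_i$, hence $k_i - 1 = \mu_i$ dots of the form $(0, s_i)$. Summing over all merge scales $s_1 < \dots < s_m$ accounts for $\sum_{i=1}^m (k_i - 1) = n - 1$ dots, and the remaining single generator survives forever, contributing the dot $(0, +\infty) = (0, s_{m+1})$ with multiplicity $\mu_{m+1} = 1$ — consistent with the convention $s_{m+1} = +\infty$ and, at the top merge, $k_{m+1}$ interpreted so that $\mu_{m+1} = k_{m+1} - 1 = 1$, i.e. the final merger combines all remaining pieces into one while leaving one immortal component. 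A bookkeeping check that $\sum_{i=1}^{m+1}\mu_i = n$ confirms nothing is lost.

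The main obstacle I expect is not topological but a matter of \emph{careful alignment of conventions}: (i) the scaling between "balls of radius $s$" and "SL threshold $s$" (a factor of $2$ that the paper seems to absorb silently, judging by the figures), and (ii) the treatment of the point at infinity and the exact meaning of $\mu_{m+1}$, since the statement bundles the immortal class into the same formula $\mu_i = k-1$ with $i = m+1$. I would resolve (i) by simply adopting whatever convention makes $\De_{SL}(A)$'s merge scales coincide with the death scales of $H_0$ — the cleanest route is to define $f$ and the SL threshold so that the Čech-type and Vietoris–Rips-type filtrations agree in dimension $0$ (a standard fact: $0$-dimensional persistence of a point cloud is the same whether computed from balls or from the SL/MST, this is essentially the classical equivalence of $0$D persistence with the MST edge lengths mentioned in the abstract). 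For (ii) I would just note explicitly that the immortal component corresponds to the unbounded interval module $\I[0,+\infty)$, whose "death" is $+\infty = s_{m+1}$, and that the top merge scale $s_m$ together with the convention $\mu_{m+1}=1$ makes the formula uniform. Everything else is routine: counting ranks of $H_0$ and invoking decomposability of the persistence module.
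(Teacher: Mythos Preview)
The paper does not actually prove this claim: it introduces Claims~\ref{claim:0D_persistence_SL} and~\ref{claim:0D_persistence_MST} explicitly as ``folklore'' and states them without argument (the $\blacksquare$ simply terminates the statement). So there is no paper proof to compare against.

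Your proposed argument is the standard one and is correct in outline: identify $X_s^f$ with the union of closed $s$-balls around the points of $A$, observe that $H_0(X_s^f)$ has rank equal to the number of SL clusters at scale $s$, note that all births occur at $0$, and count rank drops at each merge scale. Your bookkeeping $\sum_{i=1}^{m}(k_i-1)=n-1$ plus one immortal class is the right way to verify completeness.

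Two small points worth tightening. First, your worry about the factor of $2$ is well founded: the paper is internally inconsistent between Definition~\ref{dfn:sl_clustering} (chain condition $d(a_i,a_{i+1})\le s$) and its figures and Claim~\ref{claim:0D_persistence_MST} (dots at $(0,0.5l_i)$, i.e.\ balls of radius $s$). You handle this correctly by adopting whichever convention makes the merge scales of $\De_{SL}(A)$ coincide with the death scales of $H_0$; just state once which one you are using rather than hedging. Second, the claim as written says ``$k$ subsets merge into a larger subset'' (singular), but at a given scale $s_i$ several \emph{independent} mergers can occur simultaneously; your rank-drop argument covers this automatically, but you should say so explicitly so that $\mu_i$ is understood as the total rank drop at $s_i$, summed over all mergers at that scale.
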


\begin{myclaim}[0D persistence from MST]
\label{claim:0D_persistence_MST}
For a set $A$ of $n$ points in a metric space $(X,d)$, let $f:X\to\R$ be the distance to $A$.
Let a Minimum Spanning Tree $\MST(A)$ have edge-lengths $l_1\leq\dots\leq l_{n-1}$.
The persistence diagram $\PD\{H_0(X_s^f)\}$ consists of the $n-1$ dots $(0,0.5l_i)$ counted with multiplicities if some edge-lengths are equal, plus the infinite dot $(0,+\infty)$.
\bs
\end{myclaim}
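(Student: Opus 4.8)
The plan is to derive Claim~\ref{claim:0D_persistence_MST} from Claim~\ref{claim:0D_persistence_SL}, the bridge being the classical fact that single-linkage clustering is exactly the cluster-merging history of Kruskal's algorithm run on the complete weighted graph with vertex set $A$ and edge weights $d(a,b)$. So the whole task reduces to translating the merge scales $s_i$ and multiplicities $\mu_i$ of $\De_{SL}(A)$ from Claim~\ref{claim:0D_persistence_SL} into the language of $\MST(A)$.

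First I would record the graph picture of single-linkage: by Definition~\ref{dfn:sl_clustering}, the partition $\De_{SL}(A;s)$ is the set of connected components of the graph $G_s$ on the vertex set $A$ whose edges are the pairs $\{a,b\}$ with $d(a,b)$ below the threshold corresponding to the scale $s$ (distance $\ell$ corresponding to $s=\ell/2$ in the normalisation of Fig.~\ref{fig:5-point_line}, under which balls of radius $s$ first overlap at $s=\ell/2$). As $s$ grows, $G_s$ acquires edges in non-decreasing order of length, and the key combinatorial fact is: at a scale $s$ the number of SL clusters drops by $j$ if and only if $\MST(A)$ has exactly $j$ edges of the corresponding length $2s$. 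This is the standard greedy/matroid property of spanning trees: after contracting all strictly shorter edges, a spanning forest of the subgraph formed by the next-shortest edges is forced to consist of the MST edges of that length, so $j$ equals the number of components merged at scale $s$ however those simultaneous mergers are grouped into merge sets. Since all minimum spanning trees of a finite weighted graph have the same multiset of edge-lengths $l_1\leq\dots\leq l_{n-1}$, this shows that the distinct merge scales $s_1<\dots<s_m$ of $\De_{SL}(A)$ together with their multiplicities $\mu_i$ (in the sense of Claim~\ref{claim:0D_persistence_SL}) are exactly the distinct values among $\{0.5\,l_1,\dots,0.5\,l_{n-1}\}$, with $\mu_i$ counting how many $l_j$ realise each value; in particular $\sum_{i=1}^m\mu_i=n-1$.

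Next I would handle the infinite dot: $A$ is finite, so $G_s$ is the complete graph on $A$ (hence connected) once $s\geq 0.5\,\mathrm{diam}(A)$, which is the reason the last scale in Claim~\ref{claim:0D_persistence_SL} is $s_{m+1}=+\infty$ and it contributes the single dot $(0,+\infty)$. Substituting the description of the finite merge scales obtained above into the conclusion of Claim~\ref{claim:0D_persistence_SL}, the diagram $\PD\{H_0(X_s^f)\}$ becomes the multiset $\{(0,0.5\,l_i):i=1,\dots,n-1\}\cup\{(0,+\infty)\}$, which is exactly Claim~\ref{claim:0D_persistence_MST}.

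The one step that is not bookkeeping is the Kruskal identification in the presence of ties, i.e. proving that a block of equal-length edges contributes, as merge events, precisely the MST edges of that length; I expect this to be the main (albeit textbook) obstacle, and I would settle it via the matroid/spanning-forest argument sketched above rather than by tracking individual edges. A secondary point is simply to keep the factor $2$ between the threshold in Definition~\ref{dfn:sl_clustering} and the ball radius in the sublevel set $X_s^f$ consistent throughout; one could instead bypass Claim~\ref{claim:0D_persistence_SL} and prove directly that the persistence module $\{H_0(X_s^f)\}$ is isomorphic to $\{H_0(G_{2s})\}$ before running Kruskal, but going through Claim~\ref{claim:0D_persistence_SL} is the shorter route.
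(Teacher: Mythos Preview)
Your derivation is correct and is the standard route to this folklore fact. Note, however, that the paper does not actually supply a proof of Claim~\ref{claim:0D_persistence_MST}: both Claims~\ref{claim:0D_persistence_SL} and~\ref{claim:0D_persistence_MST} are explicitly labelled as folklore and are simply stated, so there is no original argument to compare yours against. Your reduction to Claim~\ref{claim:0D_persistence_SL} via the Kruskal picture, together with the matroid/spanning-forest observation that the number of MST edges of a given length equals the drop in the number of single-linkage clusters at the corresponding scale (regardless of ties), is exactly the argument one would expect and it goes through without difficulty. The only point to keep tidy, as you already flag, is the factor $2$ between the threshold in Definition~\ref{dfn:sl_clustering} (stated in terms of inter-point distance) and the radius parameter used in $X_s^f$ and in Fig.~\ref{fig:5-point_line}; the paper itself is loose about this normalisation, and your choice to work in the radius convention is the one consistent with the $0.5\,l_i$ in the statement.
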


\begin{thm}[0D persistence from a mergegram]
\label{thm:0D_persistence_mergegram} 
For a finite set $A$ in a metric space $(X,d)$, let $f:X\to\R$ be the distance to $A$.
Let the mergegram $\MG(\De_{SL}(A))$ be a multiset $\{(b_i,d_i)\}_{i=1}^k$, where some dots can be repeated.
Then the persistence diagram $\PD\{H_0(X_s^f)\}$ is the difference of the multisets $\{(0,d_i)\}_{i=1}^{k}-\{(0,b_i)\}_{i=1}^{k}$ containing each dot $(0,s)$ exactly $\#b-\#d$ times, where $\#b$ is the number of births $b_i=s$, $\#d$ is the number of deaths $d_i=s$.
All trivial dots $(0,0)$ are ignored, alternatively we take $\{(0,d_i)\}_{i=1}^{k}$ only with $d_i>0$.
\bs
\end{thm}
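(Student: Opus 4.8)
The plan is to prove the asserted equality of multisets one scale at a time. For a fixed value $\sigma$ I will compare the multiplicity of the dot $(0,\sigma)$ in $\PD\{H_0(X_s^f)\}$, which Claim~\ref{claim:0D_persistence_SL} describes through the mergers of $\De_{SL}(A)$, with the number $\#\{i:d_i=\sigma\}-\#\{i:b_i=\sigma\}$ read off from the dendrogram tree. First I would set up the bookkeeping of the mergegram. The merge sets of $\De_{SL}(A)$ are exactly the nodes of the dendrogram tree: the $n=|A|$ leaves are the singletons $\{a\}$, each born at scale $0$; every other node $C$ is the union of the $c(C)\ge 2$ blocks that merge to form it and is born at the scale of that merger; the unique node equal to $A$ (the root) has death scale $+\infty$, while every non-root node $B$ has a finite death scale, namely the birth scale of its parent. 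Hence, by Definition~\ref{dfn:mergegram}, $\{(b_i,d_i)\}_{i=1}^{k}$ is precisely the family of pairs $(\mathrm{birth}(C),\mathrm{death}(C))$ as $C$ ranges over all nodes counted with multiplicity.

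Fix $\sigma>0$. The key step is the pair of equivalences: a node $C$ satisfies $\mathrm{birth}(C)=\sigma$ if and only if $C$ is created by a merger at $\sigma$, in which case the $c(C)$ blocks merging into $C$ are exactly the nodes with death scale $\sigma$ and parent $C$; and a node $B$ satisfies $\mathrm{death}(B)=\sigma$ if and only if its parent is created at $\sigma$. For the first equivalence one uses that blocks only merge and never split (Definition~\ref{dfn:dendrogram}(b)): a block $C$ present at $\sigma$ but absent at the immediately preceding merge scale must be the union of its $\ge 2$ preimage blocks there, since a single preimage block would already equal $C$. The second equivalence then follows because the death scale of any non-root node equals the birth scale of its parent. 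Granting these, summing over nodes gives
$$\#\{i:d_i=\sigma\}=\sum_{C:\ \mathrm{birth}(C)=\sigma}c(C),\qquad \#\{i:b_i=\sigma\}=\#\{C:\ \mathrm{birth}(C)=\sigma\},$$
so that $\#\{i:d_i=\sigma\}-\#\{i:b_i=\sigma\}=\sum_{C:\ \mathrm{birth}(C)=\sigma}\bigl(c(C)-1\bigr)$. I expect the real obstacle to be making these equivalences watertight from the $\sup$-based definitions of birth and death in Definition~\ref{dfn:dendrogram}(c) --- in particular checking that the set of scales at which a fixed block exists is an interval $[\mathrm{birth},\mathrm{death})$ whose endpoints are merge scales (or $+\infty$), and that every block occurring in some partition is itself a merge set.

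It remains to match $\sum_{C:\ \mathrm{birth}(C)=\sigma}(c(C)-1)$ with the multiplicity of $(0,\sigma)$ on the left. If $\sigma=s_i$ is a positive finite merge scale, Claim~\ref{claim:0D_persistence_SL} gives that multiplicity as $k-1$ when $k$ subsets of $A$ merge into one at $s_i$; applying the same count to each block created at $s_i$ covers simultaneous mergers and produces exactly $\sum_{C:\ \mathrm{birth}(C)=s_i}(c(C)-1)$. If $\sigma\in(0,+\infty)$ is not a merge scale, no node is born or dies there, so both quantities are $0$. If $\sigma=+\infty$, only the root has death scale $+\infty$ and no node is born at $+\infty$, so the difference equals $1$, the multiplicity of the infinite dot. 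If $\sigma=0$, the $n$ singletons are born and nothing dies (the death scale strictly exceeds the birth scale), so a naive difference is negative; this is exactly why the statement discards the trivial dots $(0,0)$, equivalently keeps only the deaths $d_i>0$. With that convention every dot $(0,\sigma)$ of $\PD\{H_0(X_s^f)\}$ occurs in the multiset difference $\{(0,d_i)\}_{i=1}^{k}-\{(0,b_i)\}_{i=1}^{k}$ with multiplicity $\#\{i:d_i=\sigma\}-\#\{i:b_i=\sigma\}$, which completes the argument.
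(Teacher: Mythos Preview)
Your argument is correct and follows essentially the same approach as the paper: both proofs fix a scale and compare, via Claim~\ref{claim:0D_persistence_SL}, the $k-1$ dots $(0,s)$ in $\PD\{H_0(X_s^f)\}$ with the $k$ deaths and $1$ birth that the corresponding merger contributes to the mergegram. Your version is a bit more explicit---you organize the count by the tree structure (summing $c(C)-1$ over nodes born at~$\sigma$) and separately treat simultaneous mergers, non-merge scales, $\sigma=+\infty$, and $\sigma=0$---whereas the paper argues more tersely ``per merge set $B$'', but the substance is the same.
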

\begin{proof}
In the language of Claim~\ref{claim:0D_persistence_SL}, let at a scale $s>0$ of multiplicity $\mu$ exactly $\mu+1$ subsets merge into a set $B\in\De_{SL}(A;s)$.
By Claim~\ref{claim:0D_persistence_SL} this set $B$ contributes $\mu$ dots $(0,s)$ to the persistence diagrams $\PD\{H_0(X_s^f)\}$.
By Definition~\ref{dfn:mergegram} the same set $B$ contributes $\mu+1$ dots of the form $(b_i,s)$, $i=1,\dots,\mu+1$, corresponding to the $\mu+1$ sets that merge into $B$ at the scale $s$.
Moreover, the set $B$ itself will merge later into a larger set, which creates one extra dot $(s,d)\in\PD\{H_0(X_s^f)\}$.
The exceptional case $B=A$ corresponds to $d=+\infty$.
\smallskip

If we remove one dot $(0,s)$ from the $\mu+1$ dots counted above
 as expected in the difference $\{(0,d_i)\}_{i=1}^{k}-\{(0,b_i)\}_{i=1}^{k}$ of multisets, we get exactly $\mu$ dots $(0,s)\in\PD\{H_0(X_s^f)\}$.
The required formula has been proved for contributions of any merge set $B\subset A$.
\end{proof}

In Example~\ref{exa:5-point_line} the mergegram in the last picture of Fig.~\ref{fig:5-point_line} is the multiset of 9 dots:
$$\MG(\De_{SL}(A))=\{(0,0.5),(0,0.5),(0,1),(0,1),(0.5,1.5),(1,1.5),(0,2),(1.5,2),(2,+\infty)\}.$$
Taking the difference of multisets and ignoring trivial dots $(0,0)$, we get \\
$\PD(H_0\{X_s^f\})=\{(0,0.5),(0,0.5),(0,1),(0,1),(0,1.5),(0,1.5),(0,2),(0,2),(0,+\infty)\}-$ \\
$-\{(0,0.5),(0,1),(0,2)\}=\{(0,0.5),(0,1),(0,1.5),(0,2),(0,+\infty)\}
\mbox{ as in Fig.~\ref{fig:5-point_line}}.$

\begin{exa}[the mergegram is stronger than 0D persistence]
\label{exa:mergegram_stronger}
Fig.~\ref{fig:4-point_set1} and~\ref{fig:4-point_set2} show the dendrograms, identical 0D persistence diagrams and different mergegrams for the sets $A=\{0,1,3,7\}$ and $B=\{0,1,5,7\}$ in $\R$.
This example together with Theorem~\ref{thm:0D_persistence_mergegram} justify that the new mergregram is strictly stronger than 0D persistence as an isometry invariant.

\begin{figure}[H]
\centering
\begin{tikzpicture}[scale = 0.5][sloped]
\draw [->] (-1,0) -- (-1,5) node[above] {scale $s$};
\draw[style=help lines,step = 1] (-1,0) grid (6.4,4.4);
\foreach \i in {0,0.5,...,2} { \node at (-1.5,2*\i) {\i}; }
\node (a) at (0,-0.4) {0};
\node (b) at (2,-0.4) {1};
\node (c) at (4,-0.4) {3};
\node (d) at (6,-0.4) {7};
\node (x) at (4.625,5) {};
\node (y) at (4.625,4) {};
\node (ab) at (1,1) {};
\node (abc) at (2.5,2){};
\node (abce) at (3,4){};
\draw [line width=0.5mm, blue] (a) |- (ab.center);
\draw [line width=0.5mm, blue] (b) |- (ab.center);
\draw [line width=0.5mm, blue] (ab.center) |- (abc.center);
\draw [line width=0.5mm, blue] (c) |- (abc.center);
\draw [line width=0.5mm, blue] (d) |- (abce.center);
\draw [line width=0.5mm, blue] (abc.center) |- (abce.center);
\draw [line width=0.5mm, blue] [->] (y.center) -> (x.center);
\end{tikzpicture}
\hspace*{2mm}
\begin{tikzpicture}[scale = 1.0]
  \draw[style=help lines,step = 0.5] (0,0) grid (2.4,2.4);
  \draw[->] (-0.2,0) -- (2.4,0) node[right] {birth};
  \draw[->] (0,-0.2) -- (0,2.4) node[above] {};	
  \draw[-] (0,0) -- (2.4,2.4) node[right]{};
  \foreach \x/\xtext in {0.5/0.5, 1/1, 1.5/1.5, 2.0/2}
    \draw[shift={(\x,0)}] (0pt,2pt) -- (0pt,-2pt) node[below] {$\xtext$};
  \foreach \y/\ytext in {0.5/0.5,  1/1, 1.5/1.5, 2.0/2}
    \draw[shift={(0,\y)}] (2pt,0pt) -- (-2pt,0pt) node[left] {$\ytext$};
   \filldraw [fill=blue] (0,0.5) circle (2pt);
   \filldraw [fill=blue] (0,1) circle (2pt);
   \filldraw [fill=blue] (0,2) circle (2pt);
   \filldraw [fill=blue] (0,2.6) circle (2pt);
\end{tikzpicture}
\hspace*{2mm}
\begin{tikzpicture}[scale = 1.0]
  \draw[style=help lines,step = 0.5] (0,0) grid (2.4,2.4);
  \draw[->] (-0.2,0) -- (2.4,0) node[right] {birth};
  \draw[->] (0,-0.2) -- (0,2.4) node[above] {death};	
  \draw[-] (0,0) -- (2.4,2.4) node[right]{};
  \foreach \x/\xtext in {0.5/0.5, 1/1, 1.5/1.5, 2.0/2}
    \draw[shift={(\x,0)}] (0pt,2pt) -- (0pt,-2pt) node[below] {$\xtext$};
  \foreach \y/\ytext in {0.5/0.5,  1/1, 1.5/1.5, 2.0/2}
    \draw[shift={(0,\y)}] (2pt,0pt) -- (-2pt,0pt) node[left] {$\ytext$};    
   \filldraw [fill=red] (0,0.5) circle (2pt);
   \filldraw [fill=blue] (0.0,1) circle (2pt);
   \filldraw [fill=blue] (0.0,2) circle (2pt);
   \filldraw [fill=blue] (0.5,1.0) circle (2pt);
   \filldraw [fill=blue] (1.0,2.0) circle (2pt);
   \filldraw [fill=blue] (2,2.6) circle (2pt);
\end{tikzpicture}
\caption{\textbf{Left}: single-linkage dendrogram \footnote{Note: The horizontal axes of the dendrograms are distorted} $\De_{SL}(A)$ for $A=\{0,1,3,7\}\subset\R$.
\textbf{Middle}: the 0D persistence diagram for the sublevel filtration of the distance to $A$.
\textbf{Right}: mergegram $\MG(\De_{SL}(A))$. }
\label{fig:4-point_set1}
\end{figure}

\begin{figure}[H]
\begin{tikzpicture}[scale = 0.5][sloped]
\draw [->] (-1,0) -- (-1,5) node[above] {scale $s$};
 \draw[style=help lines,step = 1] (-1,0) grid (6.4,4.4);
\foreach \i in {0,0.5,...,2} { \node at (-1.5,2*\i) {\i}; }
\node (a) at (0,-0.4) {0};
\node (b) at (2,-0.4) {1};
\node (c) at (4,-0.4) {5};
\node (d) at (6,-0.4) {7};
\node (ab) at (1,1) {};
\node (cd) at (5,2) {};
\node (abcd) at (3,4){};
\node (x) at (3,5) {};
\node (y) at (3,4) {};
\draw [line width=0.5mm, blue] (a) |- (ab.center);
\draw [line width=0.5mm, blue] (b) |- (ab.center);
\draw [line width=0.5mm, blue] (c) |- (cd.center);
\draw [line width=0.5mm, blue] (d) |- (cd.center);
\draw [line width=0.5mm, blue] (ab.center) |- (abcd.center);
\draw [line width=0.5mm, blue] (cd.center) |- (abcd.center);
\draw [line width=0.5mm, blue] [->] (y.center) -> (x.center);
\end{tikzpicture}
\hspace*{2mm}
\begin{tikzpicture}[scale = 1.0]
  \draw[style=help lines,step = 0.5] (0,0) grid (2.4,2.4);
  \draw[->] (-0.2,0) -- (2.4,0) node[right] {birth};
  \draw[->] (0,-0.2) -- (0,2.4) node[above] {};	
  \draw[-] (0,0) -- (2.4,2.4) node[right]{};
  \foreach \x/\xtext in {0.5/0.5, 1/1, 1.5/1.5, 2.0/2}
    \draw[shift={(\x,0)}] (0pt,2pt) -- (0pt,-2pt) node[below] {$\xtext$};
  \foreach \y/\ytext in {0.5/0.5,  1/1, 1.5/1.5, 2.0/2}
    \draw[shift={(0,\y)}] (2pt,0pt) -- (-2pt,0pt) node[left] {$\ytext$};    
   \filldraw [fill=blue] (0,0.5) circle (2pt);
   \filldraw [fill=blue] (0,1) circle (2pt);
   \filldraw [fill=blue] (0,2) circle (2pt);
   \filldraw [fill=blue] (0,2.6) circle (2pt);
\end{tikzpicture}
\hspace*{2mm}
\begin{tikzpicture}[scale = 1.0]
  \draw[style=help lines,step = 0.5] (0,0) grid (2.4,2.4);
  \draw[->] (-0.2,0) -- (2.4,0) node[right] {birth};
  \draw[->] (0,-0.2) -- (0,2.4) node[above] {death};	
  \draw[-] (0,0) -- (2.4,2.4) node[right]{};
  \foreach \x/\xtext in {0.5/0.5, 1/1, 1.5/1.5, 2.0/2}
    \draw[shift={(\x,0)}] (0pt,2pt) -- (0pt,-2pt) node[below] {$\xtext$};
  \foreach \y/\ytext in {0.5/0.5,  1/1, 1.5/1.5, 2.0/2}
    \draw[shift={(0,\y)}] (2pt,0pt) -- (-2pt,0pt) node[left] {$\ytext$};
   \filldraw[fill=red] (0,0.5) circle (2pt);
   \filldraw[fill = red] (0.0,1) circle (2pt);
   \filldraw [fill=blue] (0.5,2.0) circle (2pt);
   \filldraw [fill=blue] (1.0,2.0) circle (2pt);
   \filldraw [fill=blue] (2,2.6) circle (2pt);
\end{tikzpicture}
\caption{\textbf{Left}: single-linkage dendrogram $\De_{SL}(B)$ for $B=\{0,1,5,7\}\subset\R$.
\textbf{Middle}: the 0D persistence diagram for the sublevel filtration of the distance to $B$.
\textbf{Right}: mergegram $\MG(\De_{SL}(B))$. }
\label{fig:4-point_set2}
\end{figure}
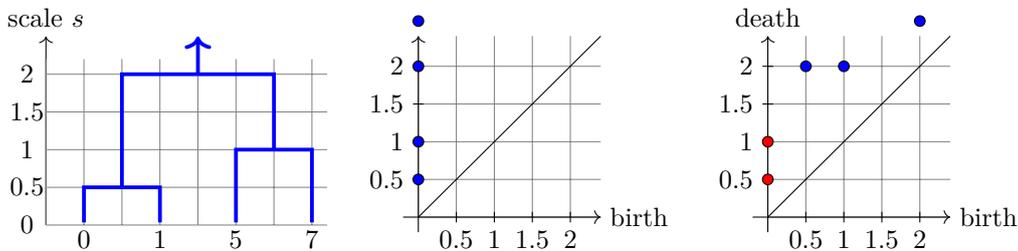
\end{exa}

\section{Distances and stability of persistence modules}
\label{sec:stability_persistence}

Definition~\ref{dfn:homo_modules} introduces homomorphisms between persistence modules, which are needed to state the stability of persistence diagrams $\PD\{H_0(X_s^f)\}$ under perturbations of a function $f:X\to\R$.
This result will imply a similar stability for the mergegram $\MG(\De_{SL}(A))$ for the dendrogram $\De_{SL}(A)$ of the single-linkage clustering of a set $A$ within a metric space $X$.

\begin{dfn}[a homomorphism of a degree $\de$ between persistence modules]
\label{dfn:homo_modules}
Let $\mathbb{U}$ and $\mathbb{V}$ be persistent modules over $\mathbb{R}$. 
A \emph{homomorphism} $\U\to\V$ of \emph{degree} $\delta\in\R$ is a collection of linear maps $\phi_t:U_t \rightarrow V_{t+\delta}$, $t \in \mathbb{R}$, such that the diagram commutes for all $s \leq t$. 
\begin{figure}[H]
\centering
\begin{tikzpicture}[scale=1.0]
  \matrix (m) [matrix of math nodes,row sep=3em,column sep=4em,minimum width=2em]
  {
     U_s & U_t \\
     V_{s+\delta} & V_{t+\delta} \\};
  \path[-stealth]
    (m-1-1) edge node [left] {$\phi_s$} (m-2-1)
            edge [-] node [above] {$u^t_s$} (m-1-2)
    (m-2-1.east|-m-2-2) edge node [above] {$v^{t+\delta}_{s+\delta}$}
            node [above] {} (m-2-2)
      (m-1-2) edge node [right] {$\phi_t$} (m-2-2);
\end{tikzpicture}
\end{figure}
Let $\text{Hom}^\delta(\mathbb{U},\mathbb{V})$ be all homomorphisms $\mathbb{U}\rightarrow \mathbb{V}$  of degree $\delta$.
Persistence modules $\U,\V$ are \emph{isomorphic} if they have inverse homomorphisms $\U\to\V$ and $\V\to\U$ of degree $\de=0$.
\bs
\end{dfn}

For a persistence module $\V$ with maps $v_s^t:V_s\to V_t$, the simplest example of a homomorphism of a degree $\de\geq 0$
 is $1_{\V}^{\de}:\V\to\V$ defined by the maps $v_s^{s+\de}$, $t\in\R$.
So the maps $v_s^t$ defining the structure of $\V$ shift all vector spaces $V_s$ the difference of scale $\de=t-s$.
\medskip

The concept of interleaved modules below is an algebraic generalization of a geometric perturbation of a set $X$ in terms of (the homology of) its sublevel sets $X_s$.

\begin{dfn}[interleaving distance ID]
\label{dfn:interleaving_distance}
Persistent modules $\mathbb{U}$ and $\mathbb{V}$ are $\delta$-interleaved if there are homomorphisms $\phi\in \text{Hom}^\delta(\mathbb{U},\mathbb{V})$ and $\psi \in \text{Hom}^\delta(\mathbb{V},\mathbb{U}) $ such that $\phi\circ\psi = 1_{\mathbb{V}}^{2\de} \text{ and } \psi\circ\phi = 1_{\mathbb{U}}^{2\de}$.
The \emph{interleaving distance} is 
$\ID(\U,\V)=\inf\{\de\geq 0 \mid \mathbb{U} \text{ and } \mathbb{V} \text{ are } \delta\text{-interleaved} \}$.
\bs
\end{dfn}

If $f,g:X\to\R$ are continuous functions such that $||f-g||_{\infty}\leq\de$ in the $L_{\infty}$-distance, the persistence modules $H_k\{f^{-1}(-\infty,s]\}$, $H_k\{g^{-1}(-\infty,s]\}$ are $\de$-interleaved for any $k$ \cite{cohen2007stability}.
The last conclusion extended to persistence diagrams in terms of the bottleneck distance below.

\begin{dfn}[bottleneck distance BD]
\label{dfn:bottleneck_distance}
Let multisets $C,D$ contain finitely many points $(p,q)\in\R^2$, $p<q$, of finite multiplicity and all diagonal points $(p,p)\in\R^2$ of infinite multiplicity.
For $\de\geq 0$, a $\de$-matching is a bijection $h:C\to D$ such that $|h(a)-a|_{\infty}\leq\de$ in the $L_{\infty}$-distance on the plane for any point $a\in C$.
The \emph{bottleneck} distance between persistence modules $\U,\V$ is $\BD(\mathbb{U},\mathbb{V}) = \text{inf}\{ \delta \mid \text{ there is a }\delta\text{-matching between } \PD(\mathbb{U}) \text{ and } \PD(\mathbb{V})\}$. 
\bs
\end{dfn}

The original stability of persistence for sequences of sublevel sets persistence was extended as Theorem~\ref{thm:stability_persistence} to $q$-tame persistence modules. 
Intuitively, a persistence module $\V$ is $q$-tame any non-diagonal square in the persistence diagram $\PD(\V)$ contains only finitely many of points, see \cite[section~2.8]{chazal2016structure}.  
Any finitely decomposable persistence module is $q$-tame.
  
\begin{thm}[stability of persistence modules]\cite[isometry theorem~4.11]{chazal2016structure}
\label{thm:stability_persistence}
 Let $\mathbb{U}$ and $\mathbb{V}$ be q-tame persistence modules. Then $\ID(\mathbb{U},\mathbb{V}) = \BD(\PD(\mathbb{U}),\PD(\mathbb{V}))$,
 where $\ID$ is the interleaving distance, $\BD$ is the bottleneck distance between persistence modules.
\bs
\end{thm}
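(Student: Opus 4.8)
The plan is to establish the equality $\ID(\U,\V)=\BD(\PD(\U),\PD(\V))$ by proving the two inequalities separately: the \emph{stability} bound $\BD\le\ID$ and the \emph{converse} bound $\ID\le\BD$. For stability, fix $\de>\ID(\U,\V)$, so that $\U,\V$ are $\de$-interleaved by some $\phi\in\mathrm{Hom}^{\de}(\U,\V)$ and $\psi\in\mathrm{Hom}^{\de}(\V,\U)$. The route is to pass from the modules to their \emph{persistence measures}: for a $q$-tame module $\W$ and an axis-parallel rectangle $R$ lying strictly above the diagonal, let $\mu_\W(R)$ be the number of points of $\PD(\W)$ inside $R$ counted with multiplicity; $q$-tameness makes this finite, and it equals an alternating sum of ranks of the structure maps $w_s^t$ at the corners of $R$. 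The interleaving lets one factor the corner maps of $\U$ through those of $\V$ with scales shifted by $\de$, and conversely, which yields the \emph{box inequality} $\mu_\U(R^{+\de})\le\mu_\V(R)\le\mu_\U(R^{-\de})$, where $R^{\pm\de}$ is $R$ dilated or eroded by $\de$ in the $L_\infty$ metric. From the box inequality one extracts a $\de$-matching by a Hall-type marriage argument: on any finite family of off-diagonal dots the dilated-counting inequalities guarantee the marriage condition, and $q$-tameness (finitely many dots in any region bounded away from the diagonal) permits an exhausting limit to a global bijection $h$ with $|h(x)-x|_\infty\le\de$, sending the remaining short-lived dots to the diagonal.

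For the converse bound, first suppose $\U$ and $\V$ are interval-decomposable, $\U\cong\bigoplus_l\I(I_l)$ and $\V\cong\bigoplus_m\I(J_m)$, and fix a $\de$-matching $h$ between $\PD(\U)$ and $\PD(\V)$. Split $h$ into matched pairs $(I_l,J_m)$ whose endpoints agree to within $\de$ in $L_\infty$, and intervals matched to the diagonal, which therefore have length at most $2\de$. For a matched pair, the interval modules $\I(I_l)$ and $\I(J_m)$ are $\de$-interleaved via the maps that are the identity wherever both are nonzero and zero otherwise; the required commuting squares hold precisely because the endpoints are $\de$-close. For an interval $I$ of length at most $2\de$, the module $\I(I)$ is $\de$-interleaved with $0$, since the structure map $i_s^{s+2\de}$ already vanishes on $I$. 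Taking the direct sum of all these interleavings assembles a $\de$-interleaving $\U\to\V$, so $\ID(\U,\V)\le\de$ for every $\de$ admitting a matching.

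The main obstacle is lifting the converse bound from interval-decomposable modules to arbitrary $q$-tame ones, which in general have no interval decomposition, so the blockwise construction does not literally apply. The standard resolution is an approximation step supplied by the structure theory behind the persistence measure: up to an arbitrarily small interleaving, a $q$-tame module is interleaved with an interval-decomposable one realizing the same diagram away from the diagonal — obtained by passing to the ``observable'' quotient that discards ephemeral behaviour — so one applies the decomposable case there and takes a limit, using the triangle inequality for $\ID$, itself routine since composing homomorphisms of degrees $\de_1,\de_2$ gives one of degree $\de_1+\de_2$. Combining the two inequalities gives $\ID(\U,\V)=\BD(\PD(\U),\PD(\V))$, which is the assertion of the theorem.
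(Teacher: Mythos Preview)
The paper does not prove this theorem at all: it is quoted verbatim as the isometry theorem from \cite[Theorem~4.11]{chazal2016structure} and used as a black box to deduce Theorem~\ref{thm:stability_mergegram}. So there is no ``paper's own proof'' to compare against.

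That said, your sketch is a faithful outline of the argument in the cited reference. The stability direction via the persistence-measure (box) inequalities and a Hall-type matching, and the converse via explicit interleavings on interval summands followed by the passage to general $q$-tame modules through the observable/ephemeral quotient, are exactly the ingredients of the Chazal--de Silva--Glisse--Oudot proof. One point worth tightening: in the $q$-tame setting the persistence diagram is only defined up to points on the diagonal, and the ``approximation by an interval-decomposable module'' is not an $\varepsilon$-interleaving for every $\varepsilon>0$ with some decomposable module, but rather the precise statement that the radical $\mathrm{rad}\,\V$ is interval-decomposable and $0$-interleaved (in the appropriate weak sense) with $\V$, with identical undecorated diagram; phrasing it as you do is morally right but would need that caveat to be made rigorous.
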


\section{Stability of the mergegram for any single-linkage dendrogram}
\label{sec:stability}

In a dendrogram $\De$ from Definition~\ref{dfn:dendrogram}, any merge set $A$ of $\De$ has a life interval $\life(A)=[b,d)$ from its birth scale $b$ to its death scale $d$.
Lemmas~\ref{lem:merge_module_decomposition} and~\ref{lem:merge_modules_interleaved} are proved in appendices. 

\begin{lem}[merge module decomposition]
\label{lem:merge_module_decomposition}
For any dendrogram $\De$ in the sense of Definition~\ref{dfn:dendrogram}, the merge module $M(\De)\cong\bigoplus\limits_{A}\mathbb{I}(\life(A))$ decomposes over all merge sets $A$.
\bs
\end{lem}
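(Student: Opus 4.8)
The plan is to exhibit an explicit direct-sum decomposition of $M(\De)$ indexed by the merge sets of $\De$ and to verify that each summand is the interval module on the corresponding life interval. Since $\De$ has only finitely many merge scales $s_0=0<s_1<\dots<s_m$ (condition (\ref{dfn:dendrogram}c)), it suffices to check the decomposition on the finitely many "critical" values and on the constant stretches between them.

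First I would fix, for each $s\in\R$, the basis $\{[A] : A\in\De(s)\}$ of $M_s(\De)$ from Definition~\ref{dfn:merge_module}; for $s<0$ the space is $0$. For a merge set $A$ of $\De$ with $\life(A)=[b,d)$, the interval module $\I([b,d))=\I([b^-,d^-])$ has one-dimensional spaces exactly for $s\in[b,d)$. Define a candidate isomorphism $\Phi:\bigoplus_{A}\I(\life(A))\to M(\De)$ by sending, in degree $s$, the generator of the summand $\I(\life(A))$ to $[A]\in M_s(\De)$ whenever $s\in\life(A)$, and to $0$ otherwise. The key combinatorial fact to establish is that for every scale $s$ the sets $A\in\De(s)$ are in bijection with the merge sets $A$ of $\De$ satisfying $s\in\life(A)$: a set $A\in\De(s)$ is born at the largest merge scale $s_i\le s$ at which $A$ first appears (equivalently, the preimage of $A$ under $\De_{s_{i-1}}^{s_i}$ has at least two elements, or $s_i=0$), and it dies at the death scale $d=\sup\{t\mid\De_s^t(A)=A\}$; these are precisely the conditions defining a merge set with $s\in\life(A)$. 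Hence $\Phi_s$ is a bijection on bases for every $s$, so each $\Phi_s$ is an isomorphism of vector spaces.

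Next I would check that $\Phi$ commutes with the structure maps, i.e. $m_s^t\circ\Phi_s=\Phi_t\circ(\bigoplus_A i_s^t)$ for all $s\le t$, where $i_s^t$ are the interval-module maps. By functoriality of dendrograms it is enough to treat consecutive merge scales, so take $s_i\le s< s_{i+1}$ and $s_{i+1}\le t$, the only case where anything moves. For a merge set $A$ with $s\in\life(A)$: if $t<d$ then $\De_s^t(A)=A$, so $m_s^t([A])=[A]$, and on the interval-module side $i_s^t$ is the identity, matching; if $t\ge d$ then $\De_s^t(A)\supsetneq A$, so $m_s^t([A])=0$, and $i_s^t=0$ on that summand, again matching. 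Chasing the square in Definition~\ref{dfn:homo_modules} with $\delta=0$ gives commutativity, and since each $\Phi_s$ is invertible, $\Phi$ is an isomorphism of persistence modules. This yields $M(\De)\cong\bigoplus_{A}\I(\life(A))$ over all merge sets $A$.

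The main obstacle is the bookkeeping at the merge scales: one must be careful that a set $A\in\De(s_i)$ which is \emph{not} itself a merge target (its preimage under $\De_{s_{i-1}}^{s_i}$ is a single set equal to $A$) was already born at an earlier scale, so its generator $[A]$ is carried along unchanged by the identity maps and must be attributed to the \emph{same} interval-module summand throughout $\life(A)$ rather than spawning a new one at $s_i$. Making this attribution well defined — i.e. showing that "the birth scale of the merge set containing $A$ as itself" is unambiguous — is exactly where conditions (\ref{dfn:dendrogram}b) and (\ref{dfn:dendrogram}c) are used, and it is the only nonformal step; everything else is a routine diagram chase.
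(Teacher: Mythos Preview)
Your proposal is correct and follows the natural route: identify, for each scale $s$, the basis $\{[A]:A\in\De(s)\}$ of $M_s(\De)$ with the set of merge sets whose life interval contains $s$, and then check that the structure maps $m_s^t$ act as the identity on a summand precisely when $t$ lies in the same life interval and as zero otherwise. The one point worth tightening is the bijection itself: you should state explicitly that if $A\in\De(s)$ first appears at a merge scale $s_j>0$, then its preimage in $\De(s_{j-1})$ cannot be a single block different from $A$ (because both partitions cover the same underlying set, so a unique preimage block would have to equal $A$), forcing at least two preimage blocks and hence making $A$ a genuine merge set; you gesture at this in your final paragraph but it is the crux of the argument.

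The paper defers its proof of this lemma to an appendix not included in the provided source, so a line-by-line comparison is impossible. That said, given how directly the statement follows from Definitions~\ref{dfn:dendrogram} and~\ref{dfn:merge_module}, any proof must proceed essentially as yours does: the merge module is already presented in a basis indexed by partition blocks, and the maps $m_s^t$ are already diagonal in that basis (each generator goes to itself or to zero), so the interval decomposition is immediate once the birth/death bookkeeping is made precise. Your argument is the expected one.
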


Lemma~\ref{lem:merge_module_decomposition} will allow us to use the stability of persistence in Theorem~\ref{thm:stability_persistence} for merge modules and also Lemma~\ref{lem:merge_modules_interleaved}.
Stability of the mergegram $\MG(\De_{SL}(A))$ will be proved under perturbations of $A$ in the Hausdorff distance defined below.

\begin{dfn}[Hausdorff distance HD]
\label{dfn:Hausdorff_distance}
For any subsets $A,B$ of a metric space $(X,d)$, the \emph{Hausdorff distance} $\HD(A,B)$ is $\max\{\sup\limits_{a\in A}\inf\limits_{b\in B} d(a,b), \sup\limits_{b\in B}\inf\limits_{a\in A} d(a,b)\}$.
\bs
\end{dfn}

\begin{lem}[merge modules interleaved]
\label{lem:merge_modules_interleaved}
If any subsets $A,B$ of a metric space $(X,d)$ have $\HD(A,B)=\de$, then the merge modules $M(\De_{SL}(A))$ and $M(\De_{SL}(B))$ are $\de$-interleaved.
\bs
\end{lem}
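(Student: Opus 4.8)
The plan is to construct the two interleaving homomorphisms from near-isometric point maps supplied by the Hausdorff bound. Since $\HD(A,B)=\de$, fix set maps $p\colon A\to B$ and $q\colon B\to A$ with $d(a,p(a))\le\de$ and $d(b,q(b))\le\de$ for all $a\in A$, $b\in B$; these are the only geometric input. The combinatorial heart is a chain--stability statement: if $a,a'$ lie in one cluster of $\De_{SL}(A;s)$, then $p(a),p(a')$ lie in one cluster of $\De_{SL}(B;s+\de)$, because replacing each vertex of a connecting chain in $A$ by its $p$-image changes each edge-length by at most $2\de$ by the triangle inequality, and this is exactly the slack that the passage from scale $s$ to scale $s+\de$ absorbs in the merge condition of $\De_{SL}$. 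Hence $p$ induces a map of partitions $\bar p_s\colon\De_{SL}(A;s)\to\De_{SL}(B;s+\de)$ sending a cluster $C$ to the cluster of $B$ at scale $s+\de$ that contains $p(C)$, natural in $s$; symmetrically one gets $\bar q_s$, and since $q\circ p$ displaces every point of $A$ by at most $2\de$, the composite $\bar q_{s+\de}\circ\bar p_s$ is the refinement map $\De_{SL}(A;s)\to\De_{SL}(A;s+2\de)$.

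Using Lemma~\ref{lem:merge_module_decomposition}, write $M(\De_{SL}(A))\cong\bigoplus_C\I(\life(C))$ over merge sets $C$ of $A$, and likewise for $B$; both sums are finite, so the modules are finitely decomposable and $q$-tame. I would define $\phi$ summand-by-summand: for a merge set $C$ with $\life(C)=[b_C,d_C)$ put $D:=\bar p_{b_C}(C)$, which is a merge set of $B$ alive at the scale $b_C+\de$, so $b_C+\de\in\life(D)=[b_D,d_D)$. Whenever $b_C+\de\in[b_D,d_D)$ the space $\mathrm{Hom}^\de\!\big(\I([b_C,d_C)),\I([b_D,d_D))\big)$ is one-dimensional over $\Z_2$, hence has a unique nonzero degree-$\de$ element; let $\phi$ restricted to the $C$-summand be that map. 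Define $\psi\colon M(\De_{SL}(B))\to M(\De_{SL}(A))$ in the mirror way using $\bar q$. This produces $\phi\in\mathrm{Hom}^\de(M(\De_{SL}(A)),M(\De_{SL}(B)))$ and $\psi\in\mathrm{Hom}^\de(M(\De_{SL}(B)),M(\De_{SL}(A)))$.

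It then remains to verify the interleaving identities $\psi\circ\phi=1^{2\de}_{M(\De_{SL}(A))}$ and, symmetrically, $\phi\circ\psi=1^{2\de}_{M(\De_{SL}(B))}$. Restricted to the $C$-summand, $\psi\circ\phi$ is the composite of the two chosen interval maps and so lands in the summand of $C'':=\bar q_{b_D}(\bar p_{b_C}(C))$. The decisive point is that $C''=C$ as soon as $\life(C)>2\de$: the identity $\bar q_{b_C+\de}\circ\bar p_{b_C}=$ refinement shows the cluster of $A$ at scale $b_C+2\de$ containing $q(D)$ is $C$ itself (it has not merged, since $d_C>b_C+2\de$), while a displacement estimate pins the birth of the tracking cluster to $b_C-\de\le b_D\le b_C+\de$, so $b_D+\de\in[b_C,d_C)$ and the cluster of $A$ at scale $b_D+\de$ containing $q(D)$ is still exactly $C$. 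Since over $\Z_2$ the space $\mathrm{Hom}^{2\de}(\I(\life(C)),\I(\life(C)))$ vanishes precisely when $\life(C)\le2\de$ and is one-dimensional otherwise, it suffices to see the composite is nonzero when $\life(C)>2\de$, which follows from the companion estimates $b_C-\de\le b_D$ and $d_D\ge d_C-\de$; then both candidate maps agree. I expect this last bookkeeping to be the main obstacle: bounding both the birth and the death scales of $\bar p_{b_C}(C)$ to within $\de$ of those of $C$, and ruling out spurious contributions of short-lived summands (for which one may instead set $\phi$ to $0$, consistently with $1^{2\de}$ vanishing there). Granting the two identities, $M(\De_{SL}(A))$ and $M(\De_{SL}(B))$ are $\de$-interleaved by Definition~\ref{dfn:interleaving_distance}, as claimed.
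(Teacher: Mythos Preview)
Your overall strategy—using the Hausdorff bound to choose near-isometries $p,q$, proving chain stability, and then lifting to module maps—is sound, and the chain-stability step is correct in the paper's ``radius'' convention (points merge when $d\le 2s$, so a $2\de$ change in edge length is absorbed by a $\de$ increase in scale).

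The genuine gap is in your construction of $\phi,\psi$ via the interval decomposition, with targets fixed at the birth scale. The ``bookkeeping'' you flag is not merely bookkeeping but an actual obstruction. Take $A=\{0,10\}$, $B=\{5\}$ in $\R$, so $\de=\HD(A,B)=5$. The singleton $C=\{0\}$ has $\life(C)=[0,5)$; your $\phi$ sends its summand to the $\{5\}$-summand $\I[0,\infty)$, and your $\psi$ sends that back to the $\{0,10\}$-summand $\I[5,\infty)$, because $\bar q_{b_D}(\{5\})=\bar q_0(\{5\})$ is the cluster of $A$ at scale $\de=5$, namely all of $A$. The composite is a \emph{nonzero} degree-$10$ map $\I[0,5)\to\I[5,\infty)$, whereas $1^{2\de}$ is diagonal and kills the $\{0\}$-summand; so $\psi\circ\phi\neq 1^{2\de}_{M(\De_{SL}(A))}$. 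Your parenthetical fix—setting $\phi$ to $0$ on summands of life $\le 2\de$—repairs this example but breaks another: for $A=\{0,6\}$, $B=\{1,5\}$ with $\de=1$, the singleton $C=\{0\}$ has life $3>2\de$, yet $D=\bar p_{0}(\{0\})=\{1\}$ has life exactly $2=2\de$, so after zeroing $\psi|_D=0$ and hence $\psi\circ\phi$ vanishes on the $\{0\}$-summand while $1^{2\de}$ does not. Your estimates $b_C-\de\le b_D\le b_C+\de$ and $d_D\ge d_C-\de$ are correct but only yield $\life(D)\ge\life(C)-2\de$, which is precisely what allows this failure.

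The underlying problem is that a summand-by-summand map with target chosen once (at $b_C$) cannot track the merge-module structure, because $\bar p_s(C)$ can change as $s$ varies while $C$ stays fixed. One has to define $\phi_s$ directly on generators of $M_s(\De_{SL}(A))$, letting the target depend on $s$ and sending $[C]\mapsto 0$ exactly when several distinct clusters of $A$ at scale $s$ land in the same cluster of $B$ at scale $s+\de$; the commutativity of the relevant squares and the identities $\psi\circ\phi=1^{2\de}$, $\phi\circ\psi=1^{2\de}$ then follow from the fact that $\bar q_{s+\de}\circ\bar p_s$ and $\bar p_{s+\de}\circ\bar q_s$ are the refinement maps. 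Your plan reaches the right neighbourhood but the specific maps you write down are not the interleaving.
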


\begin{thm}[stability of a mergegram]
\label{thm:stability_mergegram}
Any finite subsets $A,B$ of a metric space $(X,d)$ have the mergegrams  $\BD(\MG(\De_{SL}(A)),\MG(\De_{SL}(B))\leq \HD(A,B)$.
Hence any small perturbation of $A$ in the Hausdorff distance yields a similarly small perturbation in the bottleneck distance for its mergegram $\MG(\De_{SL}(A))$ of the single-linkage clustering dendrogram $\De_{SL}(A)$.
\end{thm}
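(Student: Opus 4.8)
The plan is to deduce the bottleneck bound purely formally from the isometry Theorem~\ref{thm:stability_persistence}, once the mergegram is reinterpreted as the persistence diagram of the merge module and the two preceding lemmas are invoked.

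First I would note that, since $A$ is finite, the single-linkage dendrogram $\De_{SL}(A)$ has only finitely many merge sets, so by Lemma~\ref{lem:merge_module_decomposition} the merge module $M(\De_{SL}(A))\cong\bigoplus_{A}\I(\life(A))$ is a \emph{finite} direct sum of interval modules. Hence $M(\De_{SL}(A))$ is finitely decomposable, therefore $q$-tame, and (because the life intervals $[\text{birth},\text{death})$ coming from Definition~\ref{dfn:dendrogram} are always non-degenerate) its persistence diagram $\PD(M(\De_{SL}(A)))$ equals precisely the multiset of dots $(\text{birth},\text{death})$ that constitutes $\MG(\De_{SL}(A))$ in Definition~\ref{dfn:mergegram}; the same holds for $B$.

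Next I would set $\de=\HD(A,B)$ and apply Lemma~\ref{lem:merge_modules_interleaved}, which gives that $M(\De_{SL}(A))$ and $M(\De_{SL}(B))$ are $\de$-interleaved, so $\ID(M(\De_{SL}(A)),M(\De_{SL}(B)))\le\de$. Since both merge modules are $q$-tame, Theorem~\ref{thm:stability_persistence} yields $\BD(\PD(M(\De_{SL}(A))),\PD(M(\De_{SL}(B))))=\ID(M(\De_{SL}(A)),M(\De_{SL}(B)))\le\de$. Rewriting the two persistence diagrams as the corresponding mergegrams gives $\BD(\MG(\De_{SL}(A)),\MG(\De_{SL}(B)))\le\HD(A,B)$, which is exactly the claimed inequality; the second sentence of the statement is merely an informal restatement of it.

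All the substantive work is already packaged into the two lemmas, and that is where I expect the real difficulty to lie rather than in this assembly step: Lemma~\ref{lem:merge_module_decomposition} must treat the half-open intervals $[\text{birth},\text{death})$ carefully so that the decomposition is valid over $\Z_2$ and the resulting diagram is a genuine finite multiset, while Lemma~\ref{lem:merge_modules_interleaved} must exhibit explicit degree-$\de$ homomorphisms $\phi\in\mathrm{Hom}^\de(M(\De_{SL}(A)),M(\De_{SL}(B)))$ and $\psi\in\mathrm{Hom}^\de(M(\De_{SL}(B)),M(\De_{SL}(A)))$ and verify the interleaving identities $\phi\circ\psi=1^{2\de}$ and $\psi\circ\phi=1^{2\de}$. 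The cluster-matching argument there, exploiting that every point of $A$ lies within $\de$ of $B$ and vice versa so that an SL cluster of $A$ at scale $s$ sits inside an SL cluster of $B$ at scale $s+\de$, is the technical heart; once it is in place, the theorem above follows with no further effort.
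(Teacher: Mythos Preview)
Your proposal is correct and follows exactly the same route as the paper's own proof: invoke Lemma~\ref{lem:merge_module_decomposition} to see that the merge modules are finitely decomposable (hence $q$-tame) with persistence diagram equal to the mergegram, invoke Lemma~\ref{lem:merge_modules_interleaved} to get a $\de$-interleaving for $\de=\HD(A,B)$, and then apply Theorem~\ref{thm:stability_persistence}. Your write-up is in fact more explicit than the paper's about the identification $\PD(M(\De_{SL}(A)))=\MG(\De_{SL}(A))$ and about where the real work sits, but the argument is the same.
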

\begin{proof}
The given subsets $A,B$ with $\HD(A,B)=\de$ have $\de$-interleaved merge modules by Lemma~\ref{lem:merge_modules_interleaved}, i.e. $\ID(\MG(\De_{SL}(A)),\MG(\De_{SL}(B))\leq\de$.
Since any merge module $M(\De)$ is finitely decomposable, hence $q$-tame, by Lemma~\ref{lem:merge_module_decomposition}, the corresponding mergegram $\MG(M(\De))$ satisfies Theorem~\ref{thm:stability_persistence}, i.e.
$\BD(\MG(\De_{SL}(A)),\MG(\De_{SL}(B))\leq\de$ as required.
\end{proof}

Theorem~\ref{thm:stability_mergegram} is confirmed by the following experiment on cloud perturbations in Fig.~\ref{fig:BD-vs-noise_bound}.

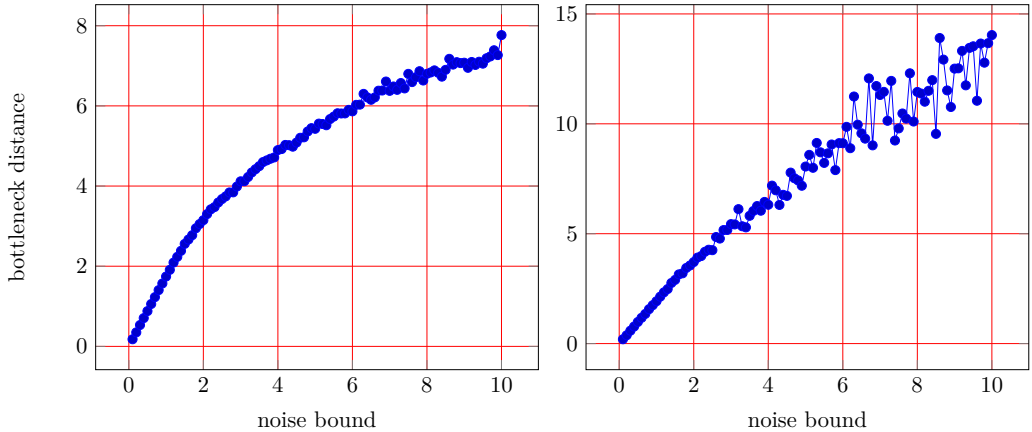
\begin{figure}[h]
\begin{tikzpicture}[scale=0.85]
\begin{axis}[xlabel = noise bound, ylabel = bottleneck distance,grid]
\addplot table [x=a, y=b, col sep=comma] {TableAvg.csv};
\end{axis}
\end{tikzpicture}
\begin{tikzpicture}[scale=0.85]
\begin{axis}[xlabel = noise bound,grid]
\addplot table [x=a, y=b, col sep=comma] {TableMax.csv};
\end{axis}
\end{tikzpicture}
\caption{The bottleneck distances (average on the left, maximum on the right) between mergegrams of sampled point clouds and their perturbations.
Both graphs are below the line $y=2x$. }
\label{fig:BD-vs-noise_bound}
\end{figure}

\begin{enumerate}
\item We uniformly generate $N=100$ black points in the cube $[0,100]^3\subset\R^3$.
\item Then we generate a random number of red points such that the $\epsilon$ ball of every black point randomly has 1, 2 or 3 red points for a noise bound $\epsilon\in[0.1,10]$ taken with a step size 0.1.
\item Compute the bottleneck distance between the mergegrams of black and red points.
  \item Repeat the experiment $K=100$ times, plot the average and maximum in Fig.~\ref{fig:BD-vs-noise_bound}.
\end{enumerate}

\section{Experiments on a classification of point sets and conclusions}
\label{sec:experiments}

Algorithm~\ref{alg:mergegram} computes the mergegram of the SL dendrogram for any finite set $A\subset\R^m$. 

\begin{alg}
\begin{algorithmic}
    \STATE
   \STATE \textbf{Input} : a finite point cloud $A\subset\mathbb{R}^m$
   \STATE Compute $\MST(A)$ and sort all edges of $\MST(A)$ in increasing order of length
   \STATE Initialize Union-Find structure $U$ over $A$. Set all points of $A$ to be their components.
   \STATE Initialize the function $\text{prev: Components}[U] \rightarrow \mathbb{R}$ by setting $\text{prev}(t) = 0$ for all $t$
   \STATE Initialize the vector Output that will consists of pairs in $\mathbb{R} \times \mathbb{R}$
   \FOR{Edge $e = (a,b)$  in the set of edges (increasing order)}
   \STATE Find components $c_1$ and $c_2$ of $a$ and $b$ respectively in Union-Find $U$
   \STATE Add pairs (prev$[c_1]$,length($e$)), (prev$[c_2]$,length($e$))  $\in \mathbb{R}^2$ to Output
   \STATE Merge components $c_1$ and $c_2$ in Union-Find $U$ and denote the component by $t$
   \STATE Set prev$[t]$ = length($e$)
   \ENDFOR
   \STATE \textbf{return} Output
\end{algorithmic}
   \label{alg:mergegram}
\end{alg}

Let $\alpha(n)$ be the inverse Ackermann function.
Other constants below are defined in \cite{march2010fast}.

\begin{thm}[a fast mergegram computation]
\label{thm:complexity}
For any cloud $A\subset\mathbb{R}^m$ of $n$ points, the mergegram $\MG(\De_{SL}(A))$ can be computed in time $O(\max\{c^6,c_p^2c^2_l \}c^{10}n\log n\,\alpha(n))$.
\end{thm}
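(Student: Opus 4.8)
The plan is to analyze Algorithm~\ref{alg:mergegram} line by line and show that the only superlinear cost comes from computing the Euclidean minimum spanning tree $\MST(A)$, then cite the fast EMST algorithm of March, Ram and Gray to obtain the claimed bound. First I would argue correctness of the algorithm as a preliminary: by Kruskal's algorithm, processing the edges of $\MST(A)$ in increasing order of length reproduces exactly the merge scales $s_i$ of the single-linkage dendrogram $\De_{SL}(A)$, since two points lie in the same SL cluster at scale $s$ iff they are connected in the subgraph of $\MST(A)$ using edges of length $\le s$ (this is the standard equivalence between single-linkage clustering and the MST). Hence when edge $e=(a,b)$ of length $\ell$ is processed and the components $c_1,c_2$ of $a,b$ get merged, the two merge sets dying at scale $\ell$ are precisely $c_1$ and $c_2$, whose birth scales are stored in $\mathrm{prev}[c_1]$ and $\mathrm{prev}[c_2]$; so the two emitted pairs $(\mathrm{prev}[c_1],\ell)$ and $(\mathrm{prev}[c_2],\ell)$ are exactly the two mergegram dots contributed by this merger according to Definition~\ref{dfn:mergegram}. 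The final component $A$ itself contributes the dot $(\mathrm{prev}[A],+\infty)$, which can be appended after the loop (or recorded implicitly). Thus the Output multiset equals $\MG(\De_{SL}(A))$.

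Next I would bound the running time of the remaining steps. There are exactly $n-1$ edges in $\MST(A)$; sorting them costs $O(n\log n)$. The loop runs $n-1$ times, and each iteration performs $O(1)$ Union-Find operations (two finds, one union) plus $O(1)$ array updates to $\mathrm{prev}$ and $O(1)$ appends to Output; over the whole loop this is $O(n\,\alpha(n))$ amortized by the classical analysis of Union-Find with union by rank and path compression, where $\alpha$ is the inverse Ackermann function. So everything except the construction of $\MST(A)$ runs in $O(n\log n + n\,\alpha(n)) = O(n\log n\,\alpha(n))$ time.

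Finally, to compute $\MST(A)$ for a point cloud $A\subset\R^m$ I would invoke the dual-tree Euclidean minimum spanning tree algorithm of March, Ram and Gray~\cite{march2010fast}, whose running time is $O(\max\{c^6,c_p^2 c_l^2\}\,c^{10}\,n\log n\,\alpha(n))$, where $c,c_p,c_l$ are the expansion-type constants defined there. Adding the $O(n\log n\,\alpha(n))$ cost of the post-processing, which is dominated by this term, gives the stated overall bound. The main obstacle — really the only non-routine point — is that the time bound for the fast EMST step is not elementary: it rests on the cover-tree machinery and the expansion-constant analysis of \cite{march2010fast}, so I would simply cite that result as a black box rather than reprove it, and make sure the constants $c,c_p,c_l$ are referenced to their definitions there. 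Everything else is a direct amortized-complexity count, so the proof is short once the EMST bound is taken as given.
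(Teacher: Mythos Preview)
Your proposal is correct and follows essentially the same approach as the paper: cite \cite[Theorem~5.1]{march2010fast} for the $\MST(A)$ cost and observe that the remaining Union-Find work is $O(n\,\alpha(n))$, hence dominated. You add a correctness argument and an explicit accounting of the sorting step, which the paper omits, but the complexity analysis is identical in spirit and structure.
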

\begin{proof}
A Minimum Spanning Tree $\MST(A)$ needs $O(\max\{c^6,c_p^2c^2_l \}c^{10}n\log n\,\alpha(n))$ time by  \cite[Theorem~5.1]{march2010fast}. 
The rest of Algorithm~\ref{alg:mergegram} is dominated by $O(n\alpha(n))$ Union-Find operations. 
Hence the full algorithm has the same computational complexity as the MST.
\end{proof}

The experiments summarized in Fig.~\ref{fig:100-point_clouds} show that the mergegram curve in blue outperforms other isometry invariants on the isometry classification by the state-of-the-art PersLay.
We generated 10 classes of 100-point clouds within the unit ball $\R^m$ for $m=2,3,4,5$.
For each class, we made 100 copies of each cloud and perturbed every point by a uniform random shift in a cube of the size $2\times\epsilon$, where $\epsilon$ is called a \emph{noise bound}. 
For each of 100 perturbed clouds, we added 25 points such that every new point is $\epsilon$-close to an original point.
Within each of 10 classes all 100 clouds were randomly rotated within the unit ball around the origin, see Fig.~\ref{fig:clouds}. 
For each of the resulting 1000 clouds, we computed the mergegram, 0D persistence diagram and the diagram of pairs of distances to two nearest neighbors for every point. 

\newcommand{\hh}{44mm}
\begin{figure}[h!]
\includegraphics[height=\hh]{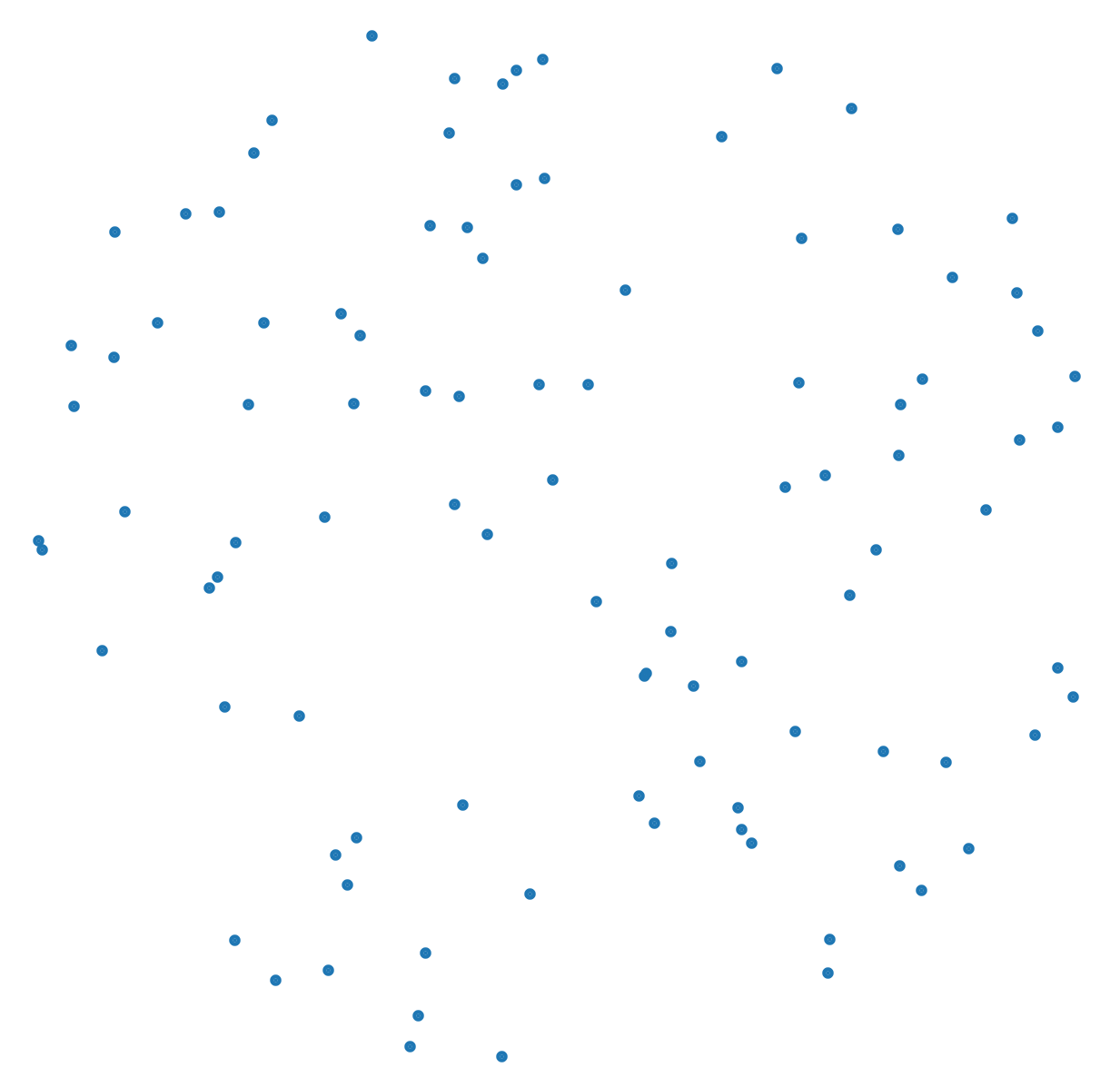}
\includegraphics[height=\hh]{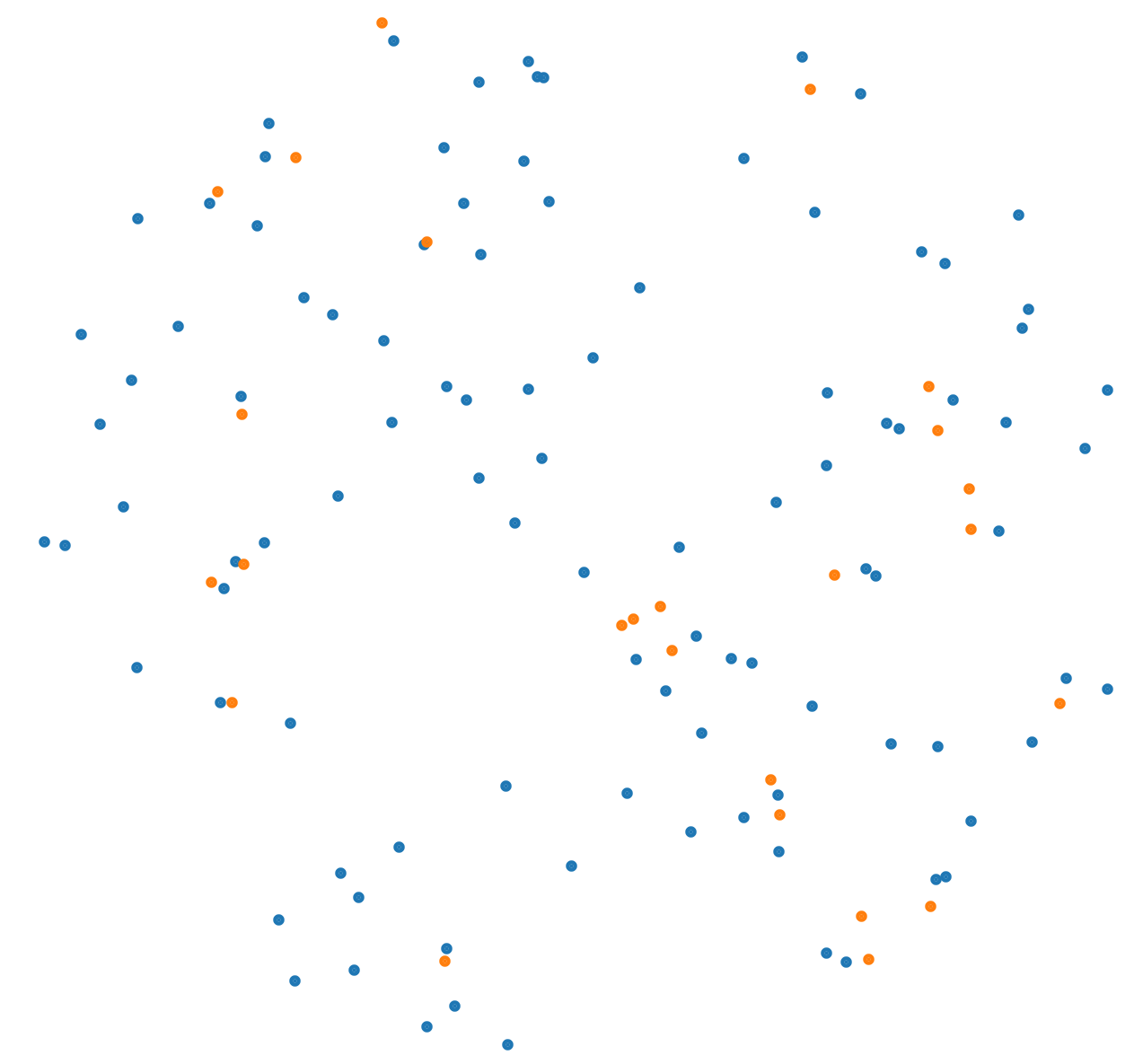}
\includegraphics[height=\hh]{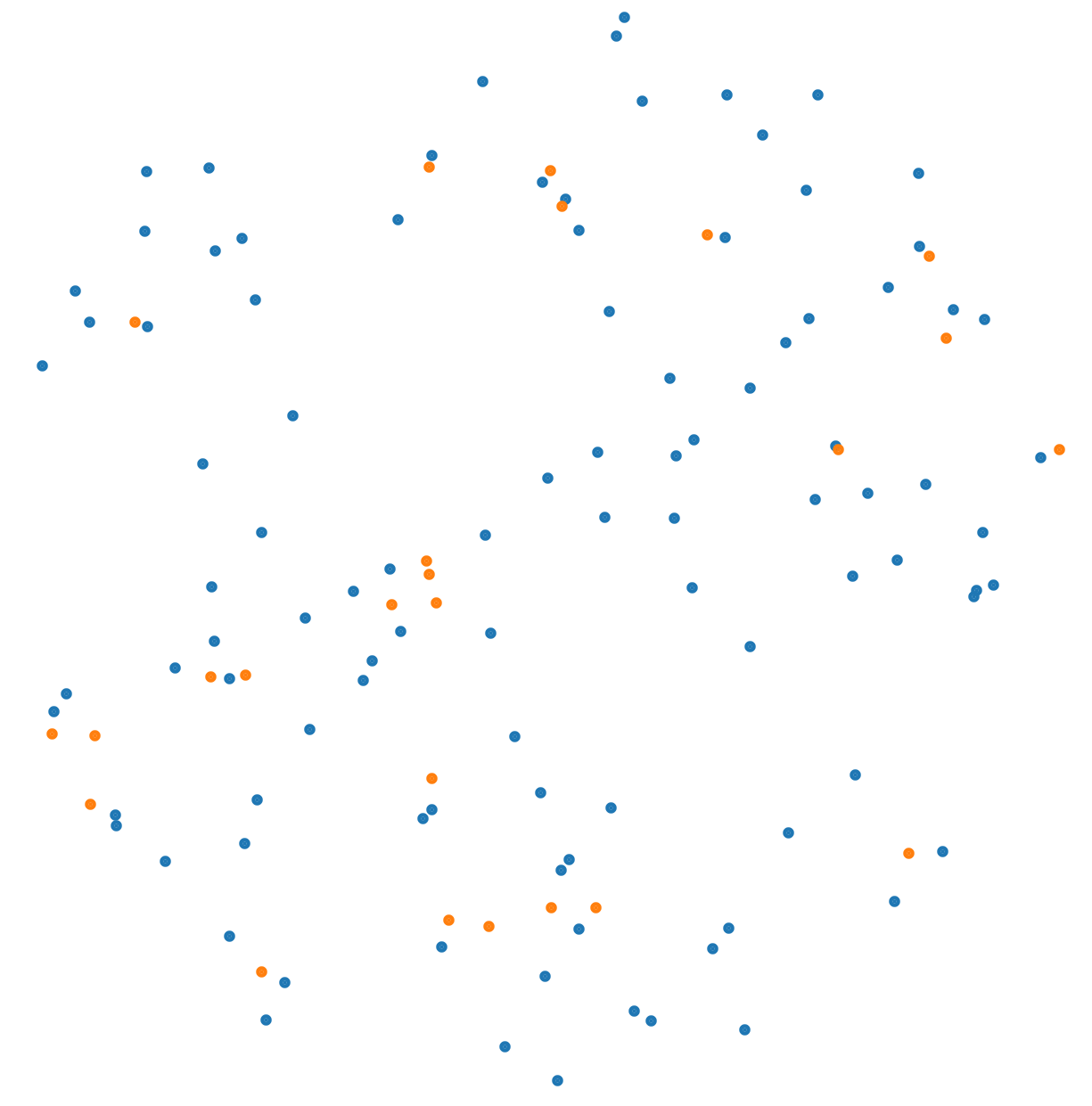}
\caption{\textbf{Left}: an initial random cloud with 100 blue points.
\textbf{Middle}: all blue points are perturbed, 25 extra orange points are added.
\textbf{Right}: a cloud is rotated through a random angle. 
Can we recognize that the initial and final clouds are in the same isometry class modulo small noise?}
\label{fig:clouds}
\end{figure}

The machine learning part has used the obtained diagrams as the input-data for the Perslay \cite{carriere2019perslay}. 
Each dataset was split into learning and test subsets in ratio 4:1. 
The learning loops ran by iterating over mini-batches consisting of 128 elements and going through the full dataset for a given number of epochs. 
The success rate was measured on the test subset.
\medskip

The original Perslay module was rewritten in Tensorflow v2 and RTX 2080 graphics card was used to run the experiments.  
The technical concepts of PersLay are explained in \cite{carriere2019perslay}:

\begin{itemize}
    \item Adam(Epochs = 300, Learning rate = 0.01)
    \item Coefficents = Linear coefficents
    \item Functional layer = [PeL(dim=50), PeL(dim=50, operationalLayer=PermutationMaxLayer)]. 
    \item Operation layer = TopK(50)
\end{itemize}

The PersLay training has used the following invariants compared in Fig.~\ref{fig:100-point_clouds}:
\begin{itemize}
\item cloud : the initial cloud $A$ of points corresponds to the baseline curve in black;
\item PD0: the 0D persistence diagram $\PD$ for distance-based filtrations of sublevel sets in red;
\item NN(2) brown curve: for each point $a\in A$ includes distances to two nearest neighbors;
\item the mergegram $\MG(\De_{SL}(A))$ of the SL dendrogram has the blue curve above others.
\end{itemize}

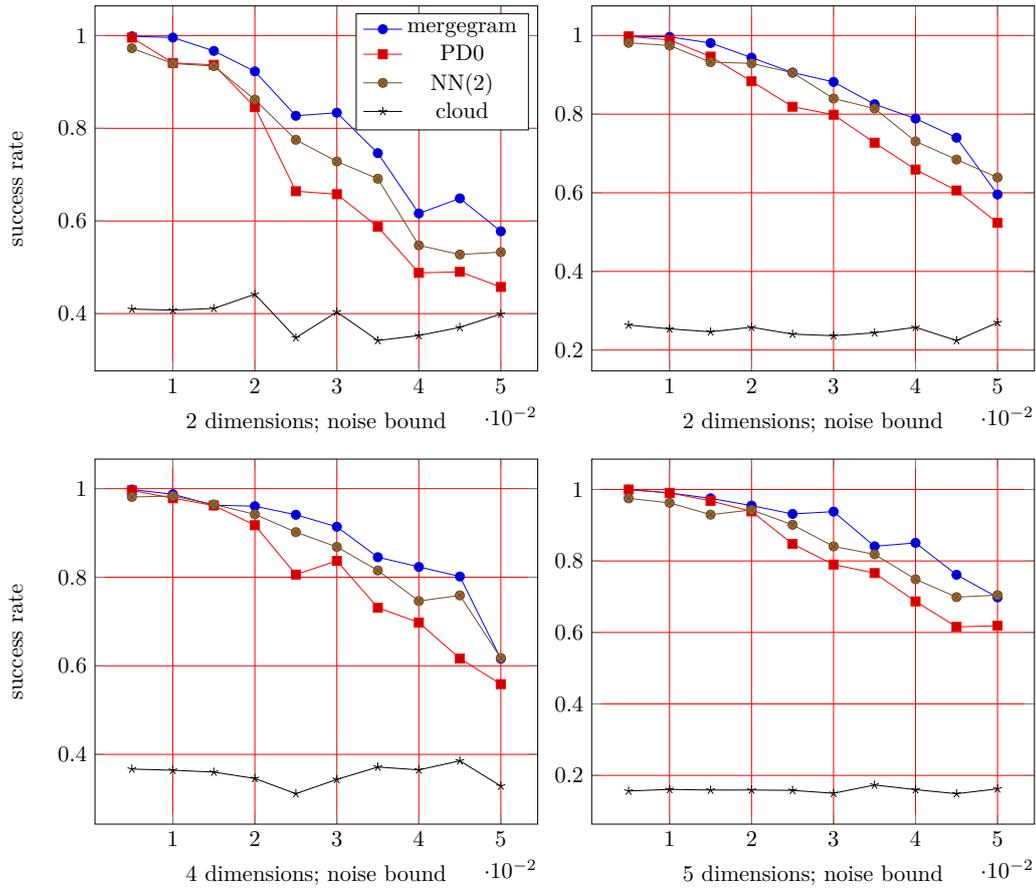
\begin{figure}[t]
\begin{tikzpicture}[scale=0.85]
\begin{axis}[xlabel = 2 dimensions; noise bound, ylabel = success rate,grid]
\addplot table [x=e, y=m, col sep=comma] {100Points25NoiseResults/dim2.csv};
\addlegendentry{mergegram}
\addplot table [x=e, y=h, col sep=comma] {100Points25NoiseResults/dim2.csv};
\addlegendentry{PD0}
\addplot table [x=e, y=n, col sep=comma] {100Points25NoiseResults/dim2.csv};
\addlegendentry{NN(2)}
\addplot table [x=e, y=c, col sep=comma] {100Points25NoiseResults/dim2.csv};
\addlegendentry{cloud}
\end{axis}
\end{tikzpicture}
\begin{tikzpicture}[scale=0.85]
\begin{axis}[xlabel = 2 dimensions; noise bound, grid]
\addplot table [x=e, y=m, col sep=comma] {100Points25NoiseResults/dim3.csv};
\addplot table [x=e, y=h, col sep=comma] {100Points25NoiseResults/dim3.csv};
\addplot table [x=e, y=n, col sep=comma] {100Points25NoiseResults/dim3.csv};
\addplot table [x=e, y=c, col sep=comma] {100Points25NoiseResults/dim3.csv};
\end{axis}
\end{tikzpicture}
\medskip

\begin{tikzpicture}[scale=0.85]
\begin{axis}[xlabel = 4 dimensions; noise bound, ylabel = success rate,grid]
\addplot table [x=e, y=m, col sep=comma] {100Points25NoiseResults/dim4.csv};
\addplot table [x=e, y=h, col sep=comma] {100Points25NoiseResults/dim4.csv};
\addplot table [x=e, y=n, col sep=comma] {100Points25NoiseResults/dim4.csv};
\addplot table [x=e, y=c, col sep=comma] {100Points25NoiseResults/dim4.csv};
\end{axis}
\end{tikzpicture}
\begin{tikzpicture}[scale=0.85]
\begin{axis}[xlabel = 5 dimensions; noise bound,grid]
\addplot table [x=e, y=m, col sep=comma] {100Points25NoiseResults/dim5.csv};
\addplot table [x=e, y=h, col sep=comma] {100Points25NoiseResults/dim5.csv};
\addplot table [x=e, y=n, col sep=comma] {100Points25NoiseResults/dim5.csv};
\addplot table [x=e, y=c, col sep=comma] {100Points25NoiseResults/dim5Cor.csv};
\end{axis}
\end{tikzpicture}
\caption{Success rates of PersLay in identifying isometry classes of 100-point clouds uniformly sampled in a unit ball, averaged over 5 different clouds and 5 cross-validations with 20/80 splits. 
}
\label{fig:100-point_clouds}
\end{figure}

Fig.~\ref{fig:100-point_clouds} shows that the new mergegram has outperformed all other invariants on the isometry classification problem.
The 0D persistence turned out to be weaker than the pairs of distances to two neighbors.
The topological persistence has found applications in data skeletonization with theoretical guarantees \cite{kurlin2015homologically,kalisnik2019higher}. 
We are planning to extend the experiments in section~\ref{sec:experiments} for classifying rigid shapes by comining the new mergegram with the 1D persistence, which has the fast $O(n\log n)$ time for any 2D cloud of $n$ points \cite{kurlin2014fast, kurlin2014auto}.
\medskip

In conclusion, the paper has extended the 0D persistence to a stronger isometry invariant, which has kept the celebrated stability under noise important for applications to noisy data. 
The initial C++ code for the mergregram is at https://github.com/YuryUoL/Mergegram and will be updated.
We  thank all the reviewers for their valuable time and helpful suggestions.

\bibliographystyle{plainurl}
\bibliography{mergegram}

\begin{thebibliography}{10}

\bibitem{carlsson2010characterization}
Gunnar Carlsson and Facundo Memoli.
\newblock Characterization, stability and convergence of hierarchical
  clustering methods.
\newblock {\em Journal of machine learning research}, 11:1425--1470, 2010.

\bibitem{carriere2019perslay}
Mathieu Carriere, Frederic Chazal, Yuichi Ike, Theo Lacombe, Martin Royer, and
  Yuhei Umeda.
\newblock Perslay: A neural network layer for persistence diagrams and new
  graph topological signatures.
\newblock {\em AISTATS, arXiv:1904.09378}, 2020.

\bibitem{chazal2016structure}
Fr{\'e}d{\'e}ric Chazal, Vin De~Silva, Marc Glisse, and Steve Oudot.
\newblock {\em The structure and stability of persistence modules}.
\newblock Springer, 2016.

\bibitem{cohen2007stability}
David Cohen-Steiner, Herbert Edelsbrunner, and John Harer.
\newblock Stability of persistence diagrams.
\newblock {\em Discrete \& Computational Geometry}, 37(1):103--120, 2007.

\bibitem{kalisnik2019higher}
Sara Kalisnik, Vitaliy Kurlin, and Davorin Lesnik.
\newblock A higher-dimensional homologically persistent skeleton.
\newblock {\em Adv. App. Maths}, 102:113--142, 2019.

\bibitem{kurlin2014auto}
Vitaliy Kurlin.
\newblock Auto-completion of contours in sketches, maps and sparse 2d images
  based on topological persistence.
\newblock In {\em Proceedings of SYNASC 2014 workshop CTIC: Computational
  Topology in Image Context}, pages 594--601. IEEE, 2014.

\bibitem{kurlin2014fast}
Vitaliy Kurlin.
\newblock A fast and robust algorithm to count topologically persistent holes
  in noisy clouds.
\newblock In {\em Proceedings of the IEEE Conference on Computer Vision and
  Pattern Recognition}, pages 1458--1463, 2014.

\bibitem{kurlin2015homologically}
Vitaliy Kurlin.
\newblock A homologically persistent skeleton is a fast and robust descriptor
  of interest points in 2d images.
\newblock In {\em LNCS, Procedings of CAIP: Computer Analysis of Images and
  Patterns}, volume 9256, pages 606 -- 617, 2015.

\bibitem{march2010fast}
William~B March, Parikshit Ram, and Alexander~G Gray.
\newblock Fast euclidean minimum spanning tree: algorithm, analysis, and
  applications.
\newblock In {\em Proceedings of SIG KDD: Knowledge discovery and data mining},
  pages 603--612, 2010.

\bibitem{zaheer2017deep}
Manzil Zaheer, Satwik Kottur, Siamak Ravanbakhsh, Barnabas Poczos, Russ~R
  Salakhutdinov, and Alexander~J Smola.
\newblock Deep sets.
\newblock In {\em Advances in neural information processing systems}, pages
  3391--3401, 2017.

\end{thebibliography}

\end{document}